\documentclass[journal]{IEEEtran}

\usepackage{amsmath,amsfonts}
\usepackage{algorithmic}
\usepackage{algorithm}
\usepackage{array}
\usepackage[caption=false,font=normalsize,labelfont=sf,textfont=sf]{subfig}
\usepackage{textcomp}
\usepackage{stfloats}
\usepackage{url}
\usepackage{verbatim}
\usepackage{graphicx}
\usepackage{cite}

\usepackage{bm}
\usepackage{url}

\usepackage{cite,graphicx,amsmath,amssymb}
\usepackage{fancyhdr}
\usepackage{hhline}
\usepackage{graphicx,graphics}
\usepackage{array,color}
\usepackage{amsmath}
\usepackage{stfloats}
\usepackage[flushleft]{threeparttable}
\usepackage{booktabs}

\newtheorem{proposition}{Proposition}
\newtheorem{remark}{Remark}

\DeclareMathOperator*{\maximize}{maximize}

\begin{document}
\title{Integrated Sensing and Communications for Low-Altitude Economy with Deterministic Sensing and Gaussian Information Signals}

\author{Xianxin~Song,~\IEEEmembership{Member,~IEEE}, Xianghao~Yu,~\IEEEmembership{Senior Member,~IEEE}, Jie~Xu,~\IEEEmembership{Fellow,~IEEE}, and~Derrick Wing Kwan Ng,~\IEEEmembership{Fellow,~IEEE}
\thanks{Part of this paper will be presented at the IEEE International Conference on Communications (ICC) 2026 \cite{song2025detection}.
}

\thanks{Xianxin Song and Xianghao Yu are with the Department of Electrical Engineering, City University of Hong Kong, Hong Kong, China (e-mail: xianxin.song@cityu.edu.hk, alex.yu@cityu.edu.hk). Xianghao Yu is the corresponding author.}
\thanks{Jie~Xu is with the School of Science and Engineering (SSE), the Shenzhen Future Network of Intelligence Institute (FNii-Shenzhen), and the Guangdong Provincial Key Laboratory of Future Networks of Intelligence, The Chinese University of Hong Kong (Shenzhen), Guangdong  518172, China (e-mail: xujie@cuhk.edu.cn).}
\thanks{Derrick Wing Kwan Ng is with the School of Electrical Engineering and Telecommunications, University of New South Wales, Sydney, NSW 2052, Australia (e-mail: w.k.ng@unsw.edu.au).}
}

\maketitle
\begin{abstract}
Reliable surveillance and communication for unmanned aerial vehicles (UAVs) are crucial for enabling and sustaining the accelerated growth of the low-altitude economy. The widespread deployment of base station (BS) infrastructure, combined with recent advancements in integrated sensing and communications (ISAC), offers a cost-effective and scalable framework for target sensing by leveraging existing wireless communication systems. This paper investigates a bistatic downlink ISAC architecture tailored to UAV operations, in which a BS communicates with a legitimate UAV and detects a potential unauthorized intruder in the surveillance region. We assume that the BS transmits superimposed ISAC waveforms comprising both Gaussian-information-bearing and deterministic sensing components to simultaneously provide communication and sensing functionalities within a unified paradigm. First, we develop a Neyman-Pearson (NP)-based optimal detector that jointly exploits both deterministic sensing and stochastic signal components, with closed-form analytical results characterizing their respective and joint contributions to overall detection performance. Subsequently, we optimize the transmit beamforming design at the BS  to maximize the minimum detection probability over the entire surveillance region, subject to a minimum signal-to-interference-plus-noise ratio (SINR) requirement at the authorized UAV and a total transmit power budget at the BS. The resulting design problem is highly non-convex due to the coupled design variables and the intricate structure of the detection metric, which is efficiently addressed via semi-definite relaxation (SDR) and successive convex approximation (SCA) techniques. Simulation results demonstrate the superiority of the proposed NP-based detector, which fully leverages the synergy between both types of signals, over conventional benchmark schemes that treat information-bearing signals merely as interference. Furthermore, the results reveal a fundamental sensing-communication trade-off, where increasing the communication-rate threshold directs more transmit power to Gaussian-information-bearing signals, thereby reducing the power allocated to deterministic components and consequently weakening detection performance.
\end{abstract}
\begin{IEEEkeywords}
Beamforming optimization, deterministic sensing signal, information-bearing signal, integrated sensing and communications (ISAC), low-altitude economy (LAE).
\end{IEEEkeywords}
\IEEEpeerreviewmaketitle
\section{Introduction}
The rapid development of unmanned aerial vehicles (UAVs) has dramatically expanded the scope of human activities from ground-based environments to lower-altitude airspace. This transformation has spawned a wide range of emerging applications such as low-altitude transportation and logistics, environmental sensing, and precision agriculture. However, the sustainable development of the low-altitude economy (LAE) critically depends on reliable and real-time UAV monitoring and communication capabilities, as unauthorized UAVs can introduce substantial risks  to public safety, airspace management, and infrastructure security \cite{10955337,11098638,11131292}.

Integrated sensing and communications (ISAC) has been recognized as one of the key enabling technologies for future sixth-generation (6G) networks \cite{9696263,3GPP_ISAC,9737357,song2025overview,11358925}, facilitating the seamless amalgamation of sensing and communication functionalities within a unified framework. Given the substantial geographic overlap between cellular communication networks and the LAE zone, leveraging existing cellular infrastructure for UAV monitoring and communication offers significant advantages in terms of deployment cost efficiency, scalability, and service coverage. However, communication and sensing are governed by fundamentally different signal design principles. Specifically, sensing systems typically utilize deterministic signals to achieve  high-resolution parameter estimation and reliable target detection\cite{richards2005fundamentals,steven1993fundamentals,1703855}, whereas communication systems rely on Gaussian-information-bearing signals to maximize achievable communication rates in additive white Gaussian noise (AWGN) channels\cite{goldsmith2005wireless}. Furthermore, in target-sensing applications, echo signals reflected by targets, whether induced by deterministic or Gaussian signals, both contain exploitable target-related information. This observation suggests that jointly leveraging deterministic and Gaussian signal components can inherently  potentially enhance sensing performance. By contrast, from the perspective of communication, only Gaussian-information-bearing signals can efficiently carry information, while deterministic sensing signals not only compete for valuable resources with communication signals, but also introduce additional interference to communication receivers. Therefore, ISAC waveform design must carefully balance the non-trivial sensing-communication trade-off to achieve dual functionality. 

To exploit the advantages of the integrated ISAC paradigm for the LAE, prior works \cite{10977743,11296935,10879807,11072035,11404407} have employed terrestrial base stations (BSs) to enable both UAV surveillance and communications. Specifically, the work in \cite{10977743} investigated a cooperative sensing scenario in which a single BS transmits sensing signals known to all receivers, while multiple BSs jointly estimate the UAV’s angle, distance, and velocity by processing the received echo signals in a distributed manner. Besides, \cite{11296935} proposed a three-phases ISAC protocol, comprising downlink sensing, uplink feedback, and downlink data communication, where pilot signals with fixed amplitude are employed during the sensing phase for target location estimation. Furthermore, the authors in \cite{10879807} studied a networked ISAC system, where multiple BSs cooperatively perform downlink communications with multiple authorized UAVs while sensing a given region. In particular, the transmit beamforming at the BSs and the UAV trajectories were jointly optimized to maximize the average communication rate, subject to minimum illumination power requirements for sensing, maximum transmit power constraints at the BSs, and UAV trajectory constraints. Moreover, in \cite{11072035}, an LAE-oriented ISAC system consisting of one BS, multiple communication UAVs, and one low-altitude target was investigated, where deep reinforcement learning was adopted to jointly design BS beamforming and UAV trajectories to maximize the expected communication sum rate while satisfying a minimum average sensing signal-to-noise ratio (SNR) requirement. Finally, \cite{11404407} examined beamforming prediction for ISAC-enabled UAV systems, where the BS leverages echo signals to adaptively align its transmit beams toward UAVs, to maximize the achievable communication rate.

The aforementioned works  \cite{10977743,11296935,10879807,11072035,11404407} can be broadly classified into two categories. The first category exclusively utilizes only dedicated deterministic sensing signals for target sensing \cite{10977743,11296935}, enabling direct application of conventional sensing performance metrics. The second category employs information-bearing signals or composite ISAC signals containing both information-bearing and sensing components for sensing\cite{10879807,11072035,11404407}. These works mainly focus on monostatic sensing scenarios and typically assume sufficiently long sensing durations, thereby neglecting the impact of inherent randomness in the information-bearing signals on sensing performance. However, bridging the gap from theory to practical employment remains challenging for UAV sensing, due to the distinctive characteristics of UAVs. On the one hand, the small radar cross section (RCS) of UAVs results in weak reflected echo signals, while numerous scatterers near terrestrial BSs introduce significant clutter interference at the sensing receiver. Consequently, conventional monostatic sensing architectures may experience severely degraded detection performance. In contrast, the spatial separation between transmitter and receiver in a bistatic configuration allows the sensing receiver to be positioned in low‑clutter environments, such as elevated towers or UAVs, which is particularly advantageous for UAV sensing. On the other hand, when information-bearing signals are employed for sensing, their inherent randomness can substantially degrade detection performance if the sensing duration is not sufficiently long in monostatic systems \cite{10147248,10206462,10596930,10645253,10977963,11087656}  or when exact knowledge of the transmitted signal is unknown at the receiver in bistatic systems\cite{11391525,xie2025bistatic}. Motivated by these considerations, we focus on a bistatic sensing architecture for UAV detection and specifically analyze the sensing performance when random information-bearing signals are exploited.
 
Prior works \cite{11391525,xie2025bistatic} have investigated the ISAC performance boundaries in bistatic configurations, accounting for the sensing performance degradation caused by the randomness of information-bearing signals. In practical bistatic sensing systems, the sensing transmitter and receiver are spatially separated. In such configurations, it is commonly assumed that sensing receivers possess only the statistical properties of the random signals, rather than the exact transmitted signal realizations\cite{11391525,xie2025bistatic}. This assumption arises since forwarding exact random signal realizations to the  sensing receiver over backhaul links would incur signaling overhead, increase transmission latency, and introduce potential security vulnerabilities. Specifically, the work in \cite{11391525} focused on bistatic direction-of-arrival (DoA) estimation by exploiting superimposed deterministic and Gaussian signals. Here, the corresponding estimation Cram\'er-Rao bound (CRB)  was derived and minimized under communication-rate constraints. Furthermore, the authors in \cite{xie2025bistatic} analyzed a bistatic detection framework employing time-division multiplexing of deterministic pilots and random data payloads, and developed a generalized likelihood ratio test (GLRT) detector utilizing both pilots and data payloads. However, since the length of pilot signals is fixed and typically much shorter than that of the data payloads, the overall achievable ISAC performance remains significantly limited. Motivated by these limitations, we aim to analyze bistatic UAV detection performance by jointly leveraging deterministic sensing and Gaussian-information-bearing signals over the entire ISAC duration, allowing for more flexible adjustment of sensing performance.

This paper investigates a bistatic ISAC-assisted UAV detection and communication system. In particular, the BS transmits a superimposed waveform comprising Gaussian-information-bearing and deterministic sensing signals, while a dedicated sensing receiver processes the received echo signals to determine the presence of a potential unauthorized UAV in the surveillance region. The sensing receiver is assumed to equip with perfect knowledge of the deterministic sensing signal realizations, but only statistical knowledge of the Gaussian-information-bearing signals. Under this setup, we aim to characterize the fundamental trade-off between detection and communication performance through the design of transmit beamforming. However, this problem is particularly challenging due to the intricate integral form of the derived detection probability function, which arises from the coexistence of these two heterogeneous signal components and leads to a highly non-convex optimization structure. 

Compared with our conference version \cite{song2025detection}, this paper extends the study in two major directions. First, unlike \cite{song2025detection}, which considered target detection at a known location with both Gaussian-information-bearing and deterministic sensing signals, this paper addressed unauthorized-UAV detection over an entire surveillance region. Second, in addition to the composite-signal setting, we also investigate a special case where  the BS transmits only Gaussian-information-bearing signals for both sensing and communications. These extensions broaden the system scope and deepen the insight into the sensing-communication trade-off. The main contributions of this paper are summarized as follows: 
\begin{itemize}
		\item First, we jointly utilize both deterministic sensing and Gaussian -information-bearing signals for effective target detection. By applying the Neyman‑Pearson (NP) theorem, we derive the optimal detector, which maximizes the detection probability subject to a prescribed false‑alarm probability constraint. Leveraging the proposed detector, we analytically derive a closed-form expression that characterizes the detection probability as a function of the target false-alarm probability.
		\item Next, we optimize the transmit beamforming design at the BS to maximize the derived detection probability, subject to a minimum signal-to-interference-plus-noise ratio (SINR) requirement at the authorized UAV and a maximum transmit power constraint at the BS. When the BS transmits both types of signals, the detection probability exhibits a highly intricate and non-convex structure, rendering direct optimization intractable. To address this, we derive an accurate approximation of the detection probability assuming a sufficiently long sensing duration.
		By adopting the approximated detection probability as the objective function, the resulting non-convex beamforming design problem is formulated into a tractable form and efficiently tackled by adopting semi-definite relaxation (SDR) and successive convex approximation (SCA) techniques. 
		\item Moreover, we consider a special case where the BS relies solely on Gaussian-information-bearing signals to concurrently fulfill both sensing and communication functionalities. In this scenario, the SNR-constrained detection probability maximization problem remains inherently non-convex.  To address this challenge, we develop an SDR-based transmit beamforming design framework. Furthermore, we present a closed-form solution for the optimal beamforming vector when the intended detection location is known {\it a priori}, providing valuable insights into the system design and its performance limits.
		\item Finally, simulation results demonstrate that the joint exploitation of deterministic and Gaussian signals substantially improves the successful detection probability and validates the feasibility of solely relying on Gaussian-information-bearing signals for target detection. The results also show that the proposed beamforming designs significantly extend the achievable ISAC performance boundary, particularly under stringent communication-rate requirements.
\end{itemize}

\textit{Notations:} 
Boldface lowercase and uppercase letters represent vectors and matrices, respectively. For a square matrix $\mathbf S$, $\mathbf S^{-1}$, $\mathrm{tr}(\mathbf S)$, and $\det(\mathbf S)$ denote its inverse, trace, and determinant, respectively. $\mathbf S \succeq \mathbf{0}$ indicates that $\mathbf S$ is positive semi-definite. For any matrix $\mathbf X$, $\mathbf X^*$, $\mathbf X^{T}$, $\mathbf X^{H}$, $\mathrm{tr}(\mathbf X)$, and $\mathrm{rank} (\mathbf X)$ denote the conjugate, transpose, conjugate transpose, trace, and rank of a matrix $\mathbf X$, respectively. $\mathcal{N}(\mathbf{x}, \mathbf{\Sigma})$ and $\mathcal{C N}(\mathbf{x}, \mathbf{\Sigma})$ denote the real-valued Gaussian and circularly symmetric complex Gaussian distributions with mean vector $\mathbf x$ and covariance matrix $\mathbf \Sigma$, respectively. $\mathbb{C}^{x \times y}$ denotes the space of $x \times y$ complex matrices. 
The imaginary unit is denoted by $\jmath = \sqrt{-1}$. The real part of a complex number $x$ is denoted by $\mathrm{Re}\{x\}$. $\|\cdot\|$ and $|\cdot|$ denote the Euclidean norm of a vector and the modulus of a complex number, respectively. The operator $\mathbb E[\cdot]$ denotes statistical expectation, and $\otimes$ denotes the Kronecker product. The notation $\mathrm{Pr}\left\{X\ge x\right\}$ represents the probability that the random variable $X$ takes a value greater than or equal to $x$. The function $\mathcal{Q}_{\chi^2_{\nu}(\lambda)}(x)=\int_{x}^\infty\frac{1}{2}\left(\frac{x}{\lambda}\right)^{\frac{\nu-2}{4}}\exp\left[-\frac{1}{2}(x+\lambda)\right] I_{\frac{\nu}{2}-1}(\sqrt{\lambda x}) dx$ denotes the right-tail probability of a non-central chi-squared distribution with $\nu$ degrees of freedom and non-centrality parameter $\lambda$, where $I_m (\cdot)$ is the modified Bessel function of the first kind of order $m$. Finally, $Q(\cdot)$ denotes the Gaussian Q-function.

\section{System Model and Problem Formulation}\label{sec:system_model}
\begin{figure}[t]       
        \centering
        \includegraphics[width=0.42\textwidth]{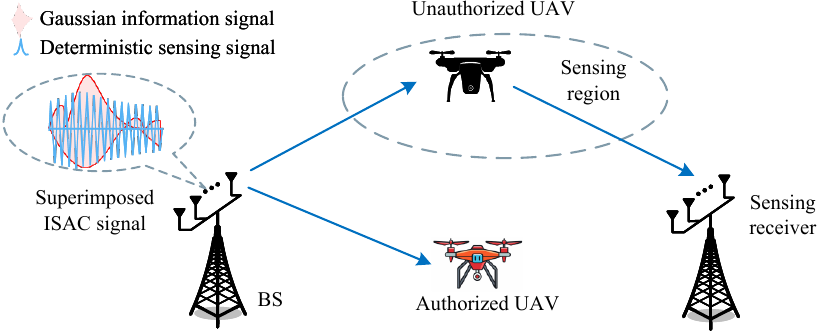}
        \vspace{-5pt}\caption{Illustration of a bistatic downlink ISAC system that simultaneously communicates with an authorized UAV and detects a potential unauthorized UAV in the surveillance region.}
        \label{fig:system_model}\vspace{-10pt}
\end{figure}
As shown in Fig.~\ref{fig:system_model}, we consider a bistatic downlink ISAC system, where a BS equipped with $M_\mathrm{t}$ transmit antennas simultaneously provides communication and sensing functionalities. Specifically, the BS delivers information to an authorized single-antenna UAV, while a dedicated sensing receiver equipped with $M_\mathrm{r}$ receive antennas monitors the presence of a potential unauthorized UAV in the surveillance region.\footnote{To facilitate the derivation of a closed-form detection probability expression and provide more insightful beamforming design results, we consider a basic setup with only a single authorized UAV for communications and a potential unauthorized UAV for detection. Nevertheless, the proposed detection-probability derivation framework can be extended to the more general case with multiple unauthorized UAVs, and the proposed beamforming design methods are also applicable to scenarios with multiple authorized UAVs, at the expenses of more involved notations.} 

\subsection{ISAC Signal Model}
First, we introduce the transmit signal model. The BS transmits a superimposed waveform containing deterministic sensing and Gaussian-information-bearing signals over the entire ISAC transmission. Specifically, the Gaussian-information-bearing signals are utilized for data transmission, whereas both signal components are jointly leveraged for target sensing to achieve improved detection performance. 

The set of time slots is denoted by $\mathcal L=\{1,\cdots,L\}$, where $L$ represents the total number of ISAC symbols. Let $s(l)\sim\mathcal{CN}(0,1)$ denote the Gaussian-information-bearing signal intended for the authorized UAV at time slot $l\in\mathcal{L}$. Let $\mathbf w \in \mathbb C^{M_\mathrm{t}\times1}$ denote the  beamforming vector for the authorized UAV  and $\mathbf x_0(l)\in \mathbb C^{M_\mathrm{t}\times1}$ denote the deterministic sensing signal at time slot $l\in\mathcal{L}$, respectively. Then, the sample covariance matrix of the deterministic sensing signals is given by
\begin{equation}\mathbf R_0=\frac{1}{L}\sum_{l=1}^L \mathbf x_0(l)\mathbf x_0^H(l).\end{equation}
Accordingly, the superimposed ISAC signal transmitted by the BS at time slot $l$ is expressed as
\begin{equation}\label{eq:transmit_signal}
\mathbf x(l) = \mathbf w s(l) + \mathbf x_0(l),\quad l\in\mathcal{L}.
\end{equation}
Note that for the superimposed signal in \eqref{eq:transmit_signal}, the sensing receiver is assumed to have perfect knowledge of the deterministic sensing signal realizations, while only statistical knowledge of the Gaussian-information-bearing signals is available. Let $P$ denote the maximum transmit power at the BS. Then, the average transmit power constraint is given by
\begin{equation}
\|\mathbf w\|^2+ \mathrm{tr}(\mathbf R_0) \le P.
\end{equation}

\subsection{Communication Model}
Next, we present the communication model and associated performance metrics. Let $\mathbf h\in\mathbb C^{M_\mathrm{t}\times 1}$ denote the channel vector from the BS to the authorized UAV. The received signal at the authorized UAV is 
\begin{equation}\label{eq:received_signal_CU}
y_\mathrm{c}(l) =\mathbf h^H \mathbf w s(l) + \mathbf h^H \mathbf x_0(l) + n_\mathrm{c}(l),\quad l\in \mathcal{L},
\end{equation}
where $n_\mathrm{c}(l)\sim\mathcal{CN}(0,\sigma_\mathrm{c}^2)$ denotes the AWGN at the authorized UAV receiver with noise power $\sigma_\mathrm{c}^2$. The SINR and the resulting achievable communication rate at the authorized UAV are respectively expressed as
\begin{equation} 
\gamma =\frac{|\mathbf h^H \mathbf w|^2}{\mathbf h^H\mathbf R_0\mathbf h + \sigma_\mathrm{c}^2},
\end{equation}
\begin{equation} 
R = \log_2(1+\gamma).
\end{equation}

\subsection{Sensing Model}

Finally, we introduce the target detection model for an unauthorized UAV in a given surveillance region. In particular, the sensing receiver processes the echo signals reflected along the BS-target-receiver cascade link to determine the presence of an unauthorized UAV in the given region. To facilitate sensing performance analysis, we partition the entire surveillance region into $Q$ sampling points, which is denoted by $\mathcal {Q} = \{1,\cdots,Q\}$. In the following, we first introduce the detection task for a target at a given point $q \in \mathcal{Q}$, and then adopt the minimum detection probability over all sampling points in $\mathcal{Q}$ as the sensing performance metric. 

We assume that the sensing channel is dominated by a line-of-sight (LoS) link \cite{9696263,10879807}. In particular, let $\mathbf a_q\in\mathbb C^{M_\mathrm{t}\times1}$ and $\mathbf b_q\in \mathbb{C}^{M_\mathrm{r}\times 1}$ denote the transmit and receive steering vectors at the BS and the sensing receiver towards location $q \in \mathcal {Q}$, respectively, i.e.,
\begin{subequations}\label{eq:steering_vector}
\begin{align}
\mathbf a_q 
=& \left[1, e^{\frac{\jmath2\pi d_\mathrm{a}\sin\phi_q}{\lambda}},\cdots,e^{\frac{\jmath2\pi(M_\mathrm{t}-1)d_\mathrm{a}\sin\phi_q}{\lambda}}\right]^T,\\
\mathbf b_q
 =& \left[1, e^{\frac{\jmath 2\pi d_\mathrm{a}\sin\theta_q}{\lambda}},\cdots,e^{\frac{\jmath 2\pi(M_\mathrm{r}-1)d_\mathrm{a}\sin\theta_q}{\lambda}}\right]^T,
\end{align}
\end{subequations}
where $d_\mathrm{a}$ denotes the distance between two neighboring antennas, $\phi_q$ and $\theta_q$ are the direction-of-departure (DoD) and DoA of the $q$-th sampling with respect to the BS and sensing receiver, respectively.
Let $\alpha_q \in \mathbb C$ denote the complex channel coefficient of the BS-target-receiver cascade link with 
$|\alpha_q|^2 = \frac{\eta\sigma_\mathrm{t}}{d_{1,q}^{2}d_{2,q}^{2}}$\cite{richards2005fundamentals},
where $\eta = \frac{c^2}{64\pi^3f^2}$ is a constant determined by the speed of light $c$ and the waveform carrier frequency $f$, $\sigma_\mathrm{t}$ denotes the target’s RCS, and $d_{1,q}$ and $d_{2,q}$ denote the distance between the BS and the $q$-th sampling and that between the $q$-th sampling and the sensing receiver, respectively.\footnote{This paper focuses on determining the presence of a potential intruder within a certain region. Accordingly, the channel parameters $\alpha_q$, $\phi_q$, and $\theta_q$ for each hypothesized grid point are assumed to be known to facilitate the detector and beamforming design\cite{10380513}.}
Next, the sensing channel matrix corresponding to the BS-target-receiver cascade link is expressed as
$\mathbf H_\mathrm{s} = \alpha_q\mathbf b_q \mathbf a_q^T$\cite{richards2005fundamentals}.

Let  $\mathcal H_0$ and $\mathcal H_1$ denote the hypotheses corresponding to the absence and presence of the unauthorized UAV, respectively. 
The received echo signal at the sensing receiver for a target located at the $q$-th sampling is given by
\begin{subequations}
  \begin{align}\label{eq:echo_0}
&\mathcal{H}_0:\mathbf y_\mathrm{s}(l) = \mathbf n_\mathrm{s}(l),\\\label{eq:echo_1}
&\mathcal{H}_1:\mathbf y_\mathrm{s}(l) = \alpha_q\mathbf b_q \mathbf a_q^T\mathbf w s(l)+ \alpha_q\mathbf b_q \mathbf a_q^T \mathbf x_0(l) + \mathbf n_\mathrm{s}(l),
  \end{align}
\end{subequations}
respectively, where $\mathbf n_\mathrm{s}(l) \sim \mathcal{CN}(\mathbf 0,\sigma_\mathrm{s}^2\mathbf I_{M_\mathrm{r}})$ denotes the AWGN vector at the sensing receiver with noise power $\sigma_\mathrm{s}^2$ at each receive antenna. 

Let $P_{\mathrm{D},q}$ and $P_{\mathrm{FA},q}$ denote the detection and false-alarm probabilities for the $q$-th sampling location, respectively. Specifically, the detection probability characterizes the likelihood that the sensing receiver correctly identities the presence of a target. In particular, the false-alarm probability denotes how often it incorrectly declares a target when none is present. In general, there exists an inherent performance trade-off between detection and false-alarm probabilities. Thus, radar sensing systems widely adopt the detection probability achieved under a maximum predetermined false-alarm probability as the performance metric\cite{richards2005fundamentals,10380513}, as will be derived in Section~\ref{sec:detection_probability_derivation}. Furthermore, to account for the uncertainty of the unauthorized UAV, we adopt the minimum detection probability over all sampling points in the entire surveillance region as the  sensing performance metric.

\subsection{Problem Formulation}
In this paper, we focus on exploring the fundamental performance trade-off between sensing and communication. In particular, we aim to maximize the minimum detection probability over the entire surveillance region, by jointly optimizing the transmit beamforming $\mathbf w$ and $\mathbf R_0$, subject to a maximum transmit power constraint at the BS and a minimum communication SINR requirement at the authorized UAV, which is formulated as
\begin{subequations}
  \begin{align}\notag
   \text{(P1)}:\maximize_{\mathbf w,~\mathbf R_0}\quad \min_{q\in \mathcal Q} &\quad  \{P_{\mathrm{D},q}\}\\ \label{eq:st_SINR}
    \text { s.t. }& \quad \frac{|\mathbf h^H \mathbf w|^2}{\mathbf h^H\mathbf R_0\mathbf h + \sigma_\mathrm{c}^2}\ge \gamma_0,\\\label{eq:st_powe}
    &\quad \|\mathbf w\|^2+\mathrm{tr}(\mathbf R_0) \le P,\\\label{eq:st_semi}
    &\quad  \mathbf R_0 \succeq \mathbf 0,
  \end{align}
\end{subequations}
where $\gamma_0$ denotes the minimum SINR requirement for reliable communication with the authorized UAV.

\section{Target Detector Design and Detection Probability Derivation}\label{sec:detection_probability_derivation}

In this section, we first develop an NP-based detector that jointly exploits both Gaussian-information-bearing and deterministic sensing signals and then derive a closed-form expression for the corresponding detection probability $P_{\mathrm{D},q}$ at the $q$-th sampling location.
\subsection{Target Detector}
To facilitate the detector design and performance analysis, we stack the received signals in \eqref{eq:echo_0} and \eqref{eq:echo_1} over the entire observation interval $L$ into the following vectors: 
\begin{subequations}
\begin{align}
\tilde{\mathbf s} &= [\mathbf w^T s(1),\cdots,\mathbf w^T s(L)]^T\in \mathbb C^{M_\mathrm{t}L\times1},\\
\tilde{\mathbf x}_0 &= [\mathbf x_0^T(1),\cdots,\mathbf x_0^T(L)]^T\in \mathbb C^{M_\mathrm{t}L\times1},\\
\tilde{\mathbf y} &= [\mathbf y_\mathrm{s}^T(1),\cdots,\mathbf y_\mathrm{s}^T(L)]^T\in \mathbb C^{M_\mathrm{r}L\times1},\\
\tilde{\mathbf n}_\mathrm{s} &= [\mathbf n_\mathrm{s}^T(1),\cdots,\mathbf n_\mathrm{s}^T(L)]^T\in \mathbb C^{M_\mathrm{r}L\times1}.
\end{align}
\end{subequations}
Then, the received echo signals over the whole sensing interval under hypotheses $\mathcal H_0$ and $\mathcal H_1$ are respectively given as
\begin{subequations}
  \begin{align}\label{eq:received_signal_I}
&\mathcal{H}_0:\tilde{\mathbf y}= \tilde{\mathbf n}_\mathrm{s},\\\label{eq:received_signal_II}
&\mathcal{H}_1:\tilde{\mathbf y} = \underbrace{\left(\mathbf I_L \otimes \alpha_q\mathbf b_q \mathbf a_q^T\right) \tilde{\mathbf s}}_{\mathbf u_1:~\text{Random signal}} + \underbrace{\left(\mathbf I_L \otimes \alpha_q\mathbf b_q \mathbf a_q^T\right) \tilde{\mathbf x}_0}_{\mathbf u_2:~\text{Deterministic signal}} + \tilde{\mathbf n}_\mathrm{s}.
\end{align}
\end{subequations}
At the sensing receiver, the term $\mathbf u_2 = \left(\mathbf I_L \otimes \alpha_q\mathbf b_q \mathbf a_q^T\right) \tilde{\mathbf x}_0$ is deterministic and known exactly, while $\mathbf u_1=\left(\mathbf I_L \otimes \alpha_q\mathbf b_q \mathbf a_q^T\right) \tilde{\mathbf s}$ is random with zero mean and characterized by the covariance matrix 
\begin{equation}
\mathbf {C}_{q} = \mathbb E \left[\mathbf u_1\mathbf u_1^H\right]=\mathbf I_L \otimes \left(|\alpha_q|^2|\mathbf a_q^T \mathbf w|^2\mathbf b_q \mathbf b_q^H\right), \forall q\in \mathcal Q.
\end{equation}
Based on the received signal models in \eqref{eq:received_signal_I} and \eqref{eq:received_signal_II}, we aim to determine the presence of the unauthorized UAV. 
Under the two hypotheses, the likelihood functions of the received signal $\tilde{\mathbf y}$ are respectively given in \eqref{eq:PDF_I} and \eqref{eq:PDF_II} at the top of this page.
\begin{figure*}
\begin{subequations}
  \begin{align}\label{eq:PDF_I}
&p(\tilde{\mathbf y};\mathcal{H}_0) = \frac{\exp\left(-\frac{1}{\sigma_\mathrm{s}^2}\tilde{\mathbf y}^H\tilde{\mathbf y}\right)}{\pi^{M_\mathrm{r}L}\det(\sigma_\mathrm{s}^2\mathbf I_{M_\mathrm{r}L})},\\\label{eq:PDF_II}
&p(\tilde{\mathbf y};\mathcal{H}_1) = \frac{\exp\left(-(\tilde{\mathbf y} - \left(\mathbf I_L \otimes \alpha_q\mathbf b_q \mathbf a_q^T\right) \tilde{\mathbf x}_0)^H(\mathbf {C}_{q}+\sigma_\mathrm{s}^2\mathbf I_{M_\mathrm{r}L})^{-1}(\tilde{\mathbf y} - \left(\mathbf I_L \otimes \alpha_q\mathbf b_q \mathbf a_q^T\right) \tilde{\mathbf x}_0)\right)}{\pi^{M_\mathrm{r}L}\det(\mathbf {C}_{q}+\sigma_\mathrm{s}^2\mathbf I_{M_\mathrm{r}L})}.
\end{align}
\end{subequations}
\hrulefill
\end{figure*}

Next, we introduce the optimal target detector by utilizing the NP theorem\cite{richards2005fundamentals}. In particular, the optimal detector is obtained by comparing the ratio of the likelihood functions under hypotheses $\mathcal H_1$ and $\mathcal H_0$ with a decision threshold $\delta$, i.e., 
\begin{equation}\label{eq:NP_dector} 
\frac{p(\tilde{\mathbf y};\mathcal{H}_1)}{p(\tilde{\mathbf y};\mathcal{H}_0)} \stackrel{\mathcal{H}_1}{\underset{\mathcal{H}_0}{\gtrless}} \delta,
\end{equation}
where $\delta$ is selected to satisfy a specified false-alarm probability. By substituting the likelihood functions in \eqref{eq:PDF_I} and \eqref{eq:PDF_II} into \eqref{eq:NP_dector} and taking the logarithm of both sides of \eqref{eq:NP_dector}, the NP-based detector $T(\tilde{\mathbf y})$ is given as 
\begin{equation}\label{eq:detector_re}
\begin{split}
T(\tilde{\mathbf y}) &\triangleq \tilde{\mathbf y}^H\left(\frac{1}{\sigma_\mathrm{s}^2}\mathbf I_{M_\mathrm{r}L}-(\mathbf {C}_{q}+\sigma_\mathrm{s}^2\mathbf I_{M_\mathrm{r}L})^{-1}\right)\tilde{\mathbf y}\\
&\quad+2\mathrm{Re}\left\{\left(\left(\mathbf I_L\otimes\alpha_q\mathbf b_q \mathbf a_q^T\right)\tilde{\mathbf x}_0\right)^H(\mathbf {C}_{q}+\sigma_\mathrm{s}^2\mathbf I_{M_\mathrm{r}L})^{-1}\tilde{\mathbf y}\right\}\\
&\stackrel{\mathcal{H}_1}{\underset{\mathcal{H}_0}{\gtrless}} \delta',
\end{split}
\end{equation}
where $\delta' = \ln\delta + \left(\left(\mathbf I_L \otimes\alpha_q\mathbf b_q \mathbf a_q^T\right)\tilde{\mathbf x}_0\right)^H(\mathbf {C}_{q}+\sigma_\mathrm{s}^2\mathbf I_{M_\mathrm{r}L})^{-1}\left(\left(\mathbf I_L \otimes\alpha_q\mathbf b_q \mathbf a_q^T\right)\tilde{\mathbf x}_0\right)-M_\mathrm{r}L\ln \sigma_\mathrm{s}^{2}+\ln \det(\mathbf {C}_{q}+\sigma_\mathrm{s}^2\mathbf I_{M_\mathrm{r}L})$ is the equivalent detection threshold after logarithmic transformation. By utilizing the Sherman-Morrison-Woodbury identity\cite{sherman1950adjustment}, we have 
\begin{equation}\label{eq:C_inv}
\begin{split}
&\quad ~(\mathbf {C}_{q}+\sigma_\mathrm{s}^2\mathbf I_{M_\mathrm{r}L})^{-1} \\
&=\left(\mathbf I_L \otimes \left(|\alpha_q|^2|\mathbf a_q^T \mathbf w|^2\mathbf b_q \mathbf b_q^H\right)+\sigma_\mathrm{s}^2\mathbf I_{M_\mathrm{r}L}\right)^{-1}\\
&=\frac{1}{\sigma_\mathrm{s}^2}\mathbf I_L \otimes \left(\mathbf I_{M_\mathrm{r}} - \frac{|\alpha_q|^2|\mathbf a_q^T \mathbf w|^2|\mathbf b_q \mathbf b_q^H/\sigma_\mathrm{s}^2}{1+\gamma_{\mathrm{c},q}}\right),
\end{split}
\end{equation}
where 
$
\gamma_{\mathrm{c},q} = \frac{|\alpha_q\mathbf b_q \mathbf a_q^T\mathbf w|^2}{
\sigma_\mathrm{s}^2}
=\frac{|\alpha_q|^2|\mathbf a_q^T\mathbf w|^2\|\mathbf b_q\|^2}{\sigma_\mathrm{s}^2}
= \frac{|\alpha_q|^2|\mathbf a_q^T\mathbf w|^2M_\mathrm{r}}{\sigma_\mathrm{s}^2}
$
denotes the ratio between the power of the received Gaussian signal and the noise at the sensing receiver.
Then, by substituting \eqref{eq:C_inv} into \eqref{eq:detector_re}, the test statistic $T(\tilde{\mathbf y})$ is further expressed as
\begin{equation}\label{eq:detector_I}
\begin{split}
T(\tilde{\mathbf y}) 
&=  \frac{|\alpha_q|^2|\mathbf a_q^T \mathbf w|^2}{\sigma_\mathrm{s}^4(1+\gamma_{\mathrm{c},q})}\sum_{l=1}^L \left|\mathbf b_q^H \mathbf y_\mathrm{s}(l)\right|^2\\
&\quad +\! \frac{2}{\sigma_\mathrm{s}^2(1+\gamma_{\mathrm{c},q})}\mathrm{Re}\!\left\{\sum_{l=1}^L \alpha_q^*\mathbf x_0^H(l) \mathbf a_q^*\mathbf b_q^H \mathbf y_\mathrm{s}(l)\right\}\\
&=  \frac{\gamma_{\mathrm{c},q}}{M_\mathrm{r}\sigma_\mathrm{s}^2(1+\gamma_{\mathrm{c},q})}\sum_{l=1}^L \left|\mathbf b_q^H \mathbf y_\mathrm{s}(l)\right|^2\\
 &\quad+\! \frac{2}{\sigma_\mathrm{s}^2(1+\gamma_{\mathrm{c},q})}\mathrm{Re}\!\left\{\sum_{l=1}^L \alpha_q^*\mathbf x_0^H(l) \mathbf a_q^*\mathbf b_q^H \mathbf y_\mathrm{s}(l)\right\}.
\end{split}
\end{equation}
It is evident that the NP-based detector in \eqref{eq:detector_I} consists of two components: an energy-based detector that captures the energy of received Gaussian information signal $\mathbf u_1$ and a matched-filter detector that correlates the received echo signal with the deterministic signal $\mathbf u_2$.

\begin{remark}
When the BS transmits only Gaussian-information-bearing signals, the NP-based detector in \eqref{eq:detector_I} reduces to 
\begin{equation}\label{eq:detector_Gaussian}
T_\text{ran}(\tilde{\mathbf y}) 
=\frac{\gamma_{\mathrm{c},q}}{M_\mathrm{r}\sigma_\mathrm{s}^2(1+\gamma_{\mathrm{c},q})}\sum_{l=1}^L \left|\mathbf b_q^H \mathbf y_\mathrm{s}(l)\right|^2 \stackrel{\mathcal{H}_1}{\underset{\mathcal{H}_0}{\gtrless}} \delta',
\end{equation}
which is an energy-based detector that compares the received signal energy against the decision threshold. 
\end{remark}

\subsection{Detection and False-Alarm Probabilities}
In the following, we first derive closed-form expressions for the detection and false-alarm probabilities under an arbitrary detection threshold, and then present a closed-form expression for the detection probability under a prescribed  false-alarm probability. 
To facilitate these derivations, we reformulate the test statistic in \eqref{eq:detector_I} as 
\begin{equation}\label{eq:detector_II}
\begin{split}
T(\tilde{\mathbf y}) 
&=  \frac{\gamma_{\mathrm{c},q}}{M_\mathrm{r}\sigma_\mathrm{s}^2(1+\gamma_{\mathrm{c},q})}\left(\sum_{l=1}^L \left|\mathbf b_q^H \mathbf y_\mathrm{s}(l)\right|^2 \right.\\
 &\quad \left.+ \frac{2M_\mathrm{r}}{\gamma_{\mathrm{c},q}}\mathrm{Re}\left\{\sum_{l=1}^L \alpha_q^*\mathbf x_0^H(l) \mathbf a_q^*\mathbf b_q^H \mathbf y_\mathrm{s}(l)\right\}\right)\\
 &\!=\!\frac{\gamma_{\mathrm{c},q}}{M_\mathrm{r}\sigma_\mathrm{s}^2(1+\gamma_{\mathrm{c},q})}\sum_{l=1}^L\left|\mathbf b_q^H\mathbf y_\mathrm{s}(l)+\frac{M_\mathrm{r}}{\gamma_{\mathrm{c},q}}\alpha_q\mathbf a_q^T\mathbf x_0(l)\right|^2\\
&\quad - \frac{\gamma_{\mathrm{c},q}}{M_\mathrm{r}\sigma_\mathrm{s}^2(1+\gamma_{\mathrm{c},q})}\sum_{l=1}^L\left|\frac{M_\mathrm{r}}{\gamma_{\mathrm{c},q}}\alpha_q\mathbf a_q^T\mathbf x_0(l)\right|^2\\
&\!=\!\frac{\gamma_{\mathrm{c},q}}{M_\mathrm{r}\sigma_\mathrm{s}^2(1+\gamma_{\mathrm{c},q})}\sum_{l=1}^L\left|\mathbf b_q^H\mathbf y_\mathrm{s}(l)+\frac{M_\mathrm{r}}{\gamma_{\mathrm{c},q}}\alpha_q\mathbf a_q^T\mathbf x_0(l)\right|^2\\
&\quad - \frac{L\gamma_{\mathrm{s},q}}{(1+\gamma_{\mathrm{c},q})\gamma_{\mathrm{c},q}},
\end{split}
\end{equation}
where 
$\gamma_{\mathrm{s},q} =\frac{\frac{1}{L}\sum_{l=1}^L|\alpha_q\mathbf b_q \mathbf a_q^T \mathbf x_0(l)|^2}{\sigma_\mathrm{s}^2}
=\frac{|\alpha_q|^2\mathbf a_q^T\mathbf R_0\mathbf a_q^*\|\mathbf b_q\|^2}{\sigma_\mathrm{s}^2}
= \frac{|\alpha_q|^2\mathbf a_q^T\mathbf R_0\mathbf a_q^*M_\mathrm{r}}{\sigma_\mathrm{s}^2}
$ denotes the ratio between the received deterministic signal power and the noise at the sensing receiver.\footnote{To facilitate analytical derivations in the sequel, we assume that $\gamma_{\mathrm{c},q}>0$, which differentiates the general case from the deterministic-signal-only scenario.}

Next, we consider the case where the unauthorized UAV is absent and derive the false-alarm probability. Under hypothesis $\mathcal {H}_0$, the test statistic in \eqref{eq:detector_II} becomes
\begin{equation}
\begin{split}
T(\tilde{\mathbf y};\mathcal{H}_0) 
&=\frac{\gamma_{\mathrm{c},q}}{M_\mathrm{r}\sigma_\mathrm{s}^2(1+\gamma_{\mathrm{c},q})}\sum_{l=1}^L\left|\mathbf b_q^H\mathbf n_\mathrm{s}(l)+\frac{M_\mathrm{r}}{\gamma_{\mathrm{c},q}}\alpha_q\mathbf a_q^T\mathbf x_0(l)\right|^2\\
&\quad - \frac{L\gamma_{\mathrm{s},q}}{(1+\gamma_{\mathrm{c},q})\gamma_{\mathrm{c},q}}.
\end{split}
\end{equation}
It is clear that $\mathbf b_q^H\mathbf n_\mathrm{s}(l)+\frac{M_\mathrm{r}}{\gamma_{\mathrm{c},q}}\alpha_q\mathbf a_q^T\mathbf x_0(l) \in \mathbb C$ follows a CSCG distribution with mean $\frac{M_\mathrm{r}}{\gamma_{\mathrm{c},q}}\alpha_q\mathbf a_q^T\mathbf x_0(l)$ and variance $M_\mathrm{r}\sigma_\mathrm{s}^2$. Thus, we define a normalized test variable
\begin{equation}
\begin{split}
T'(\tilde{\mathbf y};\mathcal{H}_0)&=\left(T(\tilde{\mathbf y};\mathcal{H}_0) + \frac{L\gamma_{\mathrm{s},q}}{(1+\gamma_{\mathrm{c},q})\gamma_{\mathrm{c},q}}\right)\frac{2(1+\gamma_{\mathrm{c},q})}{\gamma_{\mathrm{c},q}}\\
&=\frac{2(1+\gamma_{\mathrm{c},q})}{\gamma_{\mathrm{c},q}}T(\tilde{\mathbf y};\mathcal{H}_0) + \frac{2L\gamma_{\mathrm{s},q}}{\gamma_{\mathrm{c},q}^2}\\
&=\frac{2}{M_\mathrm{r}\sigma_\mathrm{s}^2}\sum_{l=1}^L\left|\mathbf b_q^H\mathbf n_\mathrm{s}(l)+\frac{M_\mathrm{r}}{\gamma_{\mathrm{c},q}}\alpha_q\mathbf a_q^T\mathbf x_0(l)\right|^2,
 \end{split}
\end{equation}
which follows a non-central chi-squared distribution with $\nu_1=2L$ degrees of freedom and non-centrality parameter 
\begin{equation}
\begin{split}
\lambda_1 &= \frac{2}{M_\mathrm{r}\sigma_\mathrm{s}^2}\sum_{l=1}^L\left|\frac{M_\mathrm{r}}{\gamma_{\mathrm{c},q}}\alpha_q\mathbf a_q^T\mathbf x_0(l)\right|^2\\
&=\frac{2LM_\mathrm{r}|\alpha_q|^2\mathbf a_q^T\mathbf R_0\mathbf a_q^*}{\sigma_\mathrm{s}^2\gamma_{\mathrm{c},q}}
 = \frac{2L\gamma_{\mathrm{s},q}}{\gamma_{\mathrm{c},q}^2}.
\end{split}
\end{equation}
Then, the false-alarm probability under a given detection threshold $\delta'$ is 
\begin{equation}\label{eq:P_FA_delta}
\begin{split}
 P_{\text{FA},q} &= \mathrm{Pr}\left\{T(\tilde{\mathbf y};\mathcal{H}_0)\ge \delta'\right\}\\
 & = \mathrm{Pr}\left\{T'(\tilde{\mathbf y};\mathcal{H}_0)\ge \frac{2(1+\gamma_{\mathrm{c},q})\delta'}{\gamma_{\mathrm{c},q}} + \frac{2L\gamma_{\mathrm{s},q}}{\gamma_{\mathrm{c},q}^2}\right\}\\
 &= \mathcal{Q}_{\chi^2_{2L}(\lambda_1)}\left(\frac{2(1+\gamma_{\mathrm{c},q})\delta'}{\gamma_{\mathrm{c},q}} + \frac{2L\gamma_{\mathrm{s},q}}{\gamma_{\mathrm{c},q}^2}\right).
 \end{split}
\end{equation}

Next, we consider the case where the unauthorized UAV is present and derive the corresponding detection probability. Under the alternative hypothesis $\mathcal H_1$, the test statistic in \eqref{eq:detector_II} becomes 
\begin{equation}
\begin{split}
T(\tilde{\mathbf y};\mathcal{H}_1) 
&=\frac{\gamma_{\mathrm{c},q}}{M_\mathrm{r}\sigma_\mathrm{s}^2(1+\gamma_{\mathrm{c},q})}\sum_{l=1}^L\left|\mathbf b_q^H\left(\alpha_q\mathbf b_q \mathbf a_q^T\mathbf w s(l)\right.\right.\\
&\quad\left.\left.+ \alpha_q\mathbf b_q \mathbf a_q^T \mathbf x_0(l)+\mathbf n_\mathrm{s}(l)\right)+\frac{M_\mathrm{r}}{\gamma_{\mathrm{c},q}}\alpha_q\mathbf a_q^T\mathbf x_0(l)\right|^2\\
&\quad - \frac{L\gamma_{\mathrm{s},q}}{(1+\gamma_{\mathrm{c},q})\gamma_{\mathrm{c},q}}.
\end{split}
\end{equation}
It can be observed that $\mathbf b_q^H\left(\alpha_q\mathbf b_q \mathbf a_q^T\mathbf w s(l)+ \alpha_q\mathbf b_q \mathbf a_q^T \mathbf x_0(l)+\mathbf n_\mathrm{s}(l)\right)+\frac{M_\mathrm{r}}{\gamma_{\mathrm{c},q}}\alpha_q\mathbf a_q^T\mathbf x_0(l)$ follows a CSCG distribution with mean $\frac{1+\gamma_{\mathrm{c},q}}{\gamma_{\mathrm{c},q}}M_\mathrm{r}\alpha_q\mathbf a_q^T \mathbf x_0(l)$ and variance $M_\mathrm{r}\sigma_\mathrm{s}^2(1+\gamma_{\mathrm{c},q})$. Then, the corresponding normalized test variable is defined as
\begin{equation}
\begin{split}
T'(\tilde{\mathbf y};\mathcal{H}_1)&=\left(T(\tilde{\mathbf y};\mathcal{H}_1) + \frac{L\gamma_{\mathrm{s},q}}{(1+\gamma_{\mathrm{c},q})\gamma_{\mathrm{c},q}}\right)\frac{2}{\gamma_{\mathrm{c},q}}\\
&=\frac{2}{\gamma_{\mathrm{c},q}}T(\tilde{\mathbf y};\mathcal{H}_1) + \frac{2L\gamma_{\mathrm{s},q}}{(1+\gamma_{\mathrm{c},q})\gamma_{\mathrm{c},q}^2}\\
&=\frac{2}{M_\mathrm{r}\sigma_\mathrm{s}^2(1+\gamma_{\mathrm{c},q})}\sum_{l=1}^L\left|\mathbf b_q^H\left(\alpha_q\mathbf b_q \mathbf a_q^T\mathbf w s(l)\right.\right.\\
&\quad\left.\left.+ \alpha_q\mathbf b_q \mathbf a_q^T \mathbf x_0(l)+\mathbf n_\mathrm{s}(l)\right)+\frac{M_\mathrm{r}}{\gamma_{\mathrm{c},q}}\alpha_q\mathbf a_q^T\mathbf x_0(l)\right|^2,
 \end{split}
\end{equation}
which follows a non-central chi-squared distribution with $\nu_2=2L$ degrees of freedom and non-centrality parameter 
\begin{equation}
\begin{split}
\lambda_2 & =\frac{2}{M_\mathrm{r}\sigma_\mathrm{s}^2(1+\gamma_{\mathrm{c},q})}\sum_{l=1}^L\left|\mathbf b_q^H\alpha_q\mathbf b_q \mathbf a_q^T \mathbf x_0(l)\!+\!\frac{M_\mathrm{r}}{\gamma_{\mathrm{c},q}}\alpha_q\mathbf a_q^T\mathbf x_0(l)\right|^2\\
&= \frac{2L|\alpha_q|^2 M_\mathrm{r}\mathbf a_q^T\mathbf R_0 \mathbf a_q^*(1+\gamma_{\mathrm{c},q})}{\sigma_\mathrm{s}^2\gamma_{\mathrm{c},q}^2}\\
&= \frac{2L\gamma_{\mathrm{s},q}(1+\gamma_{\mathrm{c},q})}{\gamma_{\mathrm{c},q}^2}\\
&=\lambda_1(1+\gamma_{\mathrm{c},q}).
\end{split}
\end{equation}
Thus, the detection probability under a given detection threshold $\delta'$ is  given as
\begin{equation}\label{eq:P_D_delta}
\begin{split}
P_{\text{D},q} &= \mathrm{Pr}\left\{T(\tilde{\mathbf y};\mathcal{H}_1)\ge \delta'\right\}\\
& = \mathrm{Pr}\left\{T'(\tilde{\mathbf y};\mathcal{H}_1)\ge \frac{2\delta'}{\gamma_{\mathrm{c},q}} + \frac{2L\gamma_{\mathrm{s},q}}{(1+\gamma_{\mathrm{c},q})\gamma_{\mathrm{c},q}^2}\right\}\\
&= \mathcal{Q}_{\chi^2_{2L}(\lambda_2)}\left(\frac{2\delta'}{\gamma_{\mathrm{c},q}} + \frac{2L\gamma_{\mathrm{s},q}}{(1+\gamma_{\mathrm{c},q})\gamma_{\mathrm{c},q}^2}\right).
 \end{split}
\end{equation}

Finally, based on the false-alarm probability in \eqref{eq:P_FA_delta} and the detection probability in \eqref{eq:P_D_delta}, we establish the following result.
\begin{proposition}
The detection probability $ P_{\text{D},q}$ under a predetermined false-alarm probability $P_{\text{FA},q}$ is 
\begin{equation}\label{eq:PD_FA}
 P_{\text{D},q} = \mathcal{Q}_{\chi^2_{2L}\left(\frac{2L\gamma_{\mathrm{s},q}}{\gamma_{\mathrm{c},q}^2}(1+\gamma_{\mathrm{c},q})\right)}\left(\frac{\mathcal{Q}^{-1}_{\chi^2_{2L}\left(\frac{2L\gamma_{\mathrm{s},q}}{\gamma_{\mathrm{c},q}^2}\right)}\left(P_{\text{FA},q}\right)}{1+\gamma_{\mathrm{c},q}}\right).
\end{equation}
\end{proposition}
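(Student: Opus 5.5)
The plan is to eliminate the threshold $\delta'$ between the false-alarm expression \eqref{eq:P_FA_delta} and the detection expression \eqref{eq:P_D_delta}. Both probabilities are right-tail probabilities of non-central chi-squared variables with $2L$ degrees of freedom. The non-centrality parameters are $\lambda_1=\frac{2L\gamma_{\mathrm{s},q}}{\gamma_{\mathrm{c},q}^2}$ and $\lambda_2=\lambda_1(1+\gamma_{\mathrm{c},q})$. Each is evaluated at an argument that is affine in $\delta'$.

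First, invert \eqref{eq:P_FA_delta} for the prescribed $P_{\text{FA},q}$. For fixed $\nu$ and $\lambda$, the map $x\mapsto\mathcal{Q}_{\chi^2_{\nu}(\lambda)}(x)$ is continuous and strictly decreasing on $[0,\infty)$. Its values run from $1$ down to $0$, because the density is positive on $(0,\infty)$. Hence it has a well-defined inverse on $(0,1)$, and we obtain
\begin{equation*}
\frac{2(1+\gamma_{\mathrm{c},q})\delta'}{\gamma_{\mathrm{c},q}} + \frac{2L\gamma_{\mathrm{s},q}}{\gamma_{\mathrm{c},q}^2}=\mathcal{Q}^{-1}_{\chi^2_{2L}(\lambda_1)}\left(P_{\text{FA},q}\right).
\end{equation*}
Since $\gamma_{\mathrm{c},q}>0$ by assumption, we can solve this for $\delta'$.

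Second, substitute this $\delta'$ into the argument of \eqref{eq:P_D_delta}. That argument is $\frac{2\delta'}{\gamma_{\mathrm{c},q}} + \frac{2L\gamma_{\mathrm{s},q}}{(1+\gamma_{\mathrm{c},q})\gamma_{\mathrm{c},q}^2}$. It is exactly the false-alarm argument divided by $(1+\gamma_{\mathrm{c},q})$, since dividing the left-hand side above by $1+\gamma_{\mathrm{c},q}$ reproduces both terms. The argument therefore equals $\mathcal{Q}^{-1}_{\chi^2_{2L}(\lambda_1)}(P_{\text{FA},q})/(1+\gamma_{\mathrm{c},q})$.

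Plugging in $\lambda_2=\frac{2L\gamma_{\mathrm{s},q}}{\gamma_{\mathrm{c},q}^2}(1+\gamma_{\mathrm{c},q})$ then gives \eqref{eq:PD_FA}. This relies on the earlier distributional derivations, which we take as given.

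There is no serious obstacle here. The only points needing care are the invertibility of the non-central chi-squared tail function, which gives a unique threshold for each target $P_{\text{FA},q}\in(0,1)$, and checking the scaling identity between the two arguments.
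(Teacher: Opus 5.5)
Your proposal is correct and follows the same route as the paper: invert \eqref{eq:P_FA_delta} to fix the threshold argument, note that the argument in \eqref{eq:P_D_delta} is exactly that quantity divided by $1+\gamma_{\mathrm{c},q}$, and substitute $\lambda_2=\lambda_1(1+\gamma_{\mathrm{c},q})$. Your added justification that the non-central chi-squared tail is strictly decreasing, so the threshold is unique, makes explicit a step the paper leaves implicit.
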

\begin{remark}\label{re:P_D_random}
When the BS transmits only Gaussian-information-bearing signals for both sensing and communication, the detection probability in \eqref{eq:PD_FA} becomes
\begin{equation}\label{eq:PD_FA_Gaussian}
 P'_{\text{D},q} = \mathcal{Q}_{\chi^2_{2L}\left(0\right)}\left(\frac{\mathcal{Q}^{-1}_{\chi^2_{2L}\left(0\right)}\left(P_{\text{FA},q}\right)}{1+\gamma_{\mathrm{c},q}}\right),
\end{equation}
which is monotonically increasing with respect to $\gamma_{\mathrm{c},q}$.
\end{remark}

\section{Transmit Beamforming Design With Both Gaussian Information and Deterministic Sensing Signals}\label{sec:BF_design}

In this section, we jointly optimize the transmit beamforming vector $\mathbf w$ and covariance matrix $\mathbf R_0$ at the BS to maximize the minimum detection probability of an unauthorized UAV over the entire surveillance region, subject to a minimum communication SINR requirement at the authorized UAV and a total transmit power budget at the BS.
\subsection{Detection Probability Approximation}

First, the exact detection probability expression in \eqref{eq:PD_FA} is mathematically intractable for direct beamforming optimization, due to the complex integral form of the involved non-central chi-squared right-tail probability function. To facilitate the transmit beamforming design, we introduce an asymptotic detection probability approximation for sufficiently large sensing duration $L$.
\begin{proposition}\label{prop:approximation_L}
For sufficiently large sensing time $L$, the detection probability in \eqref{eq:PD_FA} is approximated as\footnote{The accuracy of the proposed approximation in Proposition~\ref{prop:approximation_L} will be verified in the simulation results.}
\begin{subequations}
\begin{align}\label{PD_approximation_a}
&P_{\mathrm{D},q} \\
\stackrel{(a_{1})}{\approx}&  Q\left(\frac{Q^{-1}(P_{\text{FA},q})\sqrt{\gamma_{\mathrm{c},q}^2\!+\!2\gamma_{\mathrm{s},q}}\!-\!\sqrt{L}\left(\gamma_{\mathrm{c},q}^2\!+\!\gamma_{\mathrm{s},q}(2\!+\!\gamma_{\mathrm{c},q})\right)}{(1+\gamma_{\mathrm{c},q})\sqrt{\gamma_{\mathrm{c},q}^2+2\gamma_{\mathrm{s},q}(1+\gamma_{\mathrm{c},q})}}\right)\\\label{PD_approximation_b}
\stackrel{(a_{2})}{\approx}&  Q\left(Q^{-1}(P_{\text{FA},q})-\sqrt{L}\sqrt{\gamma_{\mathrm{c},q}^2+2\gamma_{\mathrm{s},q}}\right)\\
\triangleq~& \tilde P_{\mathrm{D},q}. 
\end{align}
\end{subequations}
\begin{IEEEproof}
Detailed proof is given in Appendix~\ref{app:proof_of_the_approximation}.
\end{IEEEproof}

\end{proposition}

\subsection{Transmit Beamforming Design with Approximated Detection Probability}
Next, by applying the approximations obtained in Proposition~\ref{prop:approximation_L}, maximizing the detection probability is equivalent to maximizing $\gamma_{\mathrm{c},q}^2+2\gamma_{\mathrm{s},q}$. As such, the SINR-constrained detection probability maximization problem can be reformulated as
\begin{subequations}\nonumber
  \begin{align}
   \text{(P2)}:\maximize_{\mathbf w,~\mathbf R_0} ~ \min_{q\in \mathcal Q}  &~  \{M_\mathrm{r}|\alpha_q|^4|\mathbf a_q^T \mathbf w|^4/\sigma_\mathrm{s}^2+2|\alpha_q|^2\mathbf a_q^T \mathbf R_0 \mathbf a_q^*\}\\ 
    \text { s.t. }& \quad \eqref{eq:st_SINR},~\eqref{eq:st_powe},~\text{and}~\eqref{eq:st_semi}.
  \end{align}
\end{subequations}
Then, we employ a combination of SDR and SCA techniques to obtain a high-quality solution to (P2). By defining $\mathbf W = \mathbf w\mathbf w^H$ with $\mathbf W \succeq \mathbf 0$ and $\mathrm{rank}(\mathbf W)=1$ and introducing an auxiliary optimization variable $t$, problem (P2) is equivalently expressed as
\setcounter{equation}{30} 
\begin{subequations}
  \begin{align}\notag
   &\text{(P2.1)}:\maximize_{\mathbf W,~\mathbf R_0,~t} \quad  t\\\notag
   &\text {s.t.} \quad M_\mathrm{r}|\alpha_q|^4\mathrm{tr}^2(\mathbf W \mathbf a_q^*\mathbf a_q^T)/\sigma_\mathrm{s}^2+2|\alpha_q|^2\mathrm{tr}(\mathbf R_0 \mathbf a_q^*\mathbf a_q^T) \ge t,\\\label{eq:st_PD}
   &\quad \quad \forall q\in \mathcal Q,\\ \label{eq:st_SINR_W}
    & \quad \quad \mathrm{tr}(\mathbf W \mathbf h\mathbf h^H)\ge \gamma_0\mathrm{tr}(\mathbf R_0\mathbf h\mathbf h^H) + \gamma_0\sigma_\mathrm{c}^2, \\\label{eq:st_powe_W}
    &\quad \quad \mathrm{tr}(\mathbf W)+\mathrm{tr}(\mathbf R_0) \le P,\\\label{eq:st_semi_W}
    &\quad \quad \mathbf W \succeq \mathbf 0, \mathbf R_0 \succeq \mathbf 0,\\ \label{eq:st_rank_W}
    &\quad \quad \mathrm{rank}(\mathbf W)=1.
  \end{align}
\end{subequations}
Problem (P2.1) is non-convex due to the quartic term in \eqref{eq:st_PD} and the rank-one constraint \eqref{eq:st_rank_W}. Dropping the rank‑one constraint \eqref{eq:st_rank_W} relaxes problem (P2.1) into
\begin{subequations}
  \begin{align}\notag
   \text{(P2.2)}:\maximize_{\mathbf W,~\mathbf R_0,~t} &\quad  t\\ \notag
    \text { s.t. }& \quad \eqref{eq:st_PD},~\eqref{eq:st_SINR_W},~\eqref{eq:st_powe_W},~\text{and}~\eqref{eq:st_semi_W}.
  \end{align}
\end{subequations}

In the following, we adopt the SCA method to iteratively handle the non-convex constraint in \eqref{eq:st_PD}. Let $\mathbf W^{(k)}$ and $\mathbf R_0^{(k)}$ denote the local points of $\mathbf W$ and $\mathbf R_0$ at iteration $k\ge 1$ of SCA, respectively. Accordingly, we define $f_q(\mathbf W) = \mathrm{tr}^2(\mathbf W\mathbf a_q^*\mathbf a_q^T)$. Its first-order Taylor expansion at the local point $\mathbf W^{(k)}$ at iteration $k$ is given by
\setcounter{equation}{31} 
\begin{equation}\label{eq:first_order_Taylor}
\begin{split}
f_q(\mathbf W) &= \mathrm{tr}^2(\mathbf W\mathbf a_q^*\mathbf a_q^T)\\
&\ge \mathrm{tr}^2(\mathbf W^{(k)}\mathbf a_q^*\mathbf a_q^T) + 2\mathrm{tr}(\mathbf W^{(k)}\mathbf a_q^*\mathbf a_q^T)\\
&\quad\mathrm{tr}((\mathbf W-\mathbf W^{(k)})\mathbf a_q^*\mathbf a_q^T)\\
&=2\mathrm{tr}(\mathbf W^{(k)}\mathbf a_q^*\mathbf a_q^T)\mathrm{tr}(\mathbf W\mathbf a_q^*\mathbf a_q^T)\\
&\quad-\mathrm{tr}^2(\mathbf W^{(k)}\mathbf a_q^*\mathbf a_q^T)\\
&\triangleq \tilde f_q^{(k)}(\mathbf W). 
\end{split}
\end{equation}
By replacing $f_q(\mathbf W)$ in \eqref{eq:st_PD} with its lower bound surrogate function $\tilde f_q^{(k)}(\mathbf W)$ in \eqref{eq:first_order_Taylor}, a sequence of convex subproblems can be constructed and solved iteratively, yielding a locally optimal solution to problem (P2.2). Specifically, at iteration $k$, we solve
\begin{subequations}
  \begin{align}\notag
   &\text{(P2.2.$k$)}:\maximize_{\mathbf W,~\mathbf R_0,~t} \quad  t\\
    &\text {s.t.} \quad M_\mathrm{r}|\alpha_q|^4\tilde f_q^{(k)}(\mathbf W)/\sigma_\mathrm{s}^2+2|\alpha_q|^2\mathrm{tr}(\mathbf R_0 \mathbf a_q^*\mathbf a_q^T) \ge t,\forall q\in \mathcal Q,\\\notag 
    &\quad \quad \eqref{eq:st_SINR_W},~\eqref{eq:st_powe_W},~\text{and}~\eqref{eq:st_semi_W}.
  \end{align}
\end{subequations}
Since problem (P2.2.$k$) is convex, its global optimum can be efficiently obtained by exploiting standard convex optimization solvers such as CVX \cite{cvx}. Subsequently, the optimal solution to problem (P2.2.$k$), i.e, $\{\tilde{\mathbf W}^{(k)},\tilde{\mathbf R}_0^{(k)}\}$ , is adopted as the local point for iteration $k+1$. Note that each iteration results in a non-decreasing objective value of problem (P2.2) as 
\begin{equation}
\begin{split}
&M_\mathrm{r}|\alpha_q|^4 f_q(\mathbf W^{(k+1)})/\sigma_\mathrm{s}^2+2|\alpha_q|^2\mathrm{tr}(\mathbf R_0^{(k+1)} \mathbf a_q^*\mathbf a_q^T)\\
\ge~&M_\mathrm{r}|\alpha_q|^4 \tilde f_q^{(k)}(\mathbf W^{(k+1)})/\sigma_\mathrm{s}^2+2|\alpha_q|^2\mathrm{tr}(\mathbf R_0^{(k+1)} \mathbf a_q^*\mathbf a_q^T)\\
\ge~&M_\mathrm{r}|\alpha_q|^4\tilde f_q^{(k)}(\mathbf W^{(k)})/\sigma_\mathrm{s}^2+2|\alpha_q|^2\mathrm{tr}(\mathbf R_0^{(k)} \mathbf a_q^*\mathbf a_q^T)\\
=~&M_\mathrm{r}|\alpha_q|^4f_q(\mathbf W^{(k)})/\sigma_\mathrm{s}^2+2|\alpha_q|^2\mathrm{tr}(\mathbf R_0^{(k)} \mathbf a_q^*\mathbf a_q^T),\forall q\in \mathcal Q.
\end{split}
\end{equation}
Therefore, the convergence of the proposed iterative SCA-based algorithm is guaranteed. By iteratively solving problem (P2.2.$k$), a local optimal solution of problem (P2.2) is obtained. 

Finally, we reconstruct a rank-one solution to problem (P2.1). Without loss of generality, let $\tilde{\mathbf W}$ and $\tilde{\mathbf R}_0$ denote the optimized solution to problem (P2.2). If $\tilde{\mathbf W}$ satisfies the rank-one constraint \eqref{eq:st_rank_W}, the optimized beamforming vector $\mathbf w$ for  problem (P2) can be directly recovered via eigenvalue decomposition (EVD) of $\tilde{\mathbf W}$. Otherwise, we reconstruct a feasible solution as \cite{9124713,9652071},
\begin{subequations}
\begin{align}
\mathbf w &= (\mathbf h^H \tilde{\mathbf W} \mathbf h)^{-1/2}\tilde{\mathbf W} \mathbf h,\\
\mathbf R_0 &= \tilde{\mathbf R}_0 + \tilde{\mathbf W} - \mathbf w\mathbf w^H.
\end{align}
\end{subequations}
This construction preserves the effective communication signal power toward the authorized UAV, i.e., $|\mathbf w^H\mathbf h|^2= \mathbf h\tilde{\mathbf W}\mathbf h^H$, and therefore guarantees satisfaction of the SINR constraint in \eqref{eq:st_SINR} \cite{9124713,9652071}. The detailed transmit beamforming design algorithm is summarized in Algorithm~\ref{algorithm:P2}.

To gain furthermore insight, we discuss a special case with $Q=1$, which corresponds to detecting the presence of a target at a known location.
\begin{remark}
When $Q=1$, the optimality of problem (P2.2.$k$) is attained with $\mathrm{rank}(\tilde{\mathbf W})=1$. The detailed proof is given as follows. First, we assume that the optimal solution to (P2.2.$k$) satisfies $\mathrm{rank}(\tilde{\mathbf W})>1$. Then, given an optimal covariance matrix $\tilde{\mathbf R}_0$, problem (P2.2.$k$) reduces to a semi-definite program (SDP) with $m=2$ affine constraints on $\mathbf W$, denoted as problem (P2.2.$k'$). The rank of the optimal solution to problem (P2.2.$k'$), denoted by $r= \mathrm{rank}(\tilde{\mathbf W}')$, satisfies $r(r+1)/2\le m$ \cite{5447068,pataki1998rank}. Thus, the optimal solution to problem (P2.2.$k'$) and that to problem (P2.2.$k$)  satisfies $\mathrm{rank}(\tilde{\mathbf W})=\mathrm{rank}(\tilde{\mathbf W}')\le 1$. This contradicts the initial assumption. Therefore, the solution to problem (P2.2.$k$) must satisfy the rank-one constraint in \eqref{eq:st_rank_W}. Finally, the optimized beamforming vector $\mathbf w$ to problem (P2) can be directly recovered via EVD of $\tilde{\mathbf W}$, without requiring additional rank reduction procedures.
\end{remark}

\begin{algorithm}[t]
\caption{Proposed Algorithm for Solving Problem (P2)}
\label{algorithm:P2}
\begin{algorithmic}[1] 
\STATE Initialize the iteration index $k= 1$ and set  $\mathbf W^{(1)}$ with $\mathbf W^{(1)} \succeq \mathbf 0$ and $\mathrm{rank}(\mathbf W^{(1)})=1$.
  \REPEAT
  \STATE Construct the first-order Taylor expansion $\tilde{f}^{(k)}(\mathbf W^{(k)})$ based on the local point $\mathbf W^{(k)}$.
  \STATE Solve problem (P2.2.$k$) via CVX and denote the optimal solution as $\tilde{\mathbf W}^{(k)}$ and $\tilde{\mathbf R}_0^{(k)}$.\\
  \STATE Update the local point as $\mathbf W^{(k+1)}\gets\tilde{\mathbf W}^{(k)}$.
   \STATE Update $k\gets k+1$.
  \UNTIL either the convergence condition holds or the iteration limit is reached.
  \STATE Reconstruct the rank-one solution to (P2).
\end{algorithmic}
\end{algorithm}

\section{Transmit Beamforming Design With Only Gaussian Information Signals}\label{sec:BF_design_Gaussina}
This section presents the transmit beamforming design for the case where the BS relies solely on Gaussian-information‑bearing signals for both sensing and communication functionalities. This scenario corresponds to directly leveraging the existing communication signals for ISAC.

In this case, maximizing the detection probability is equivalent to maximizing $\gamma_{\mathrm{c},q}$ as given in \eqref{eq:PD_FA_Gaussian}. Thus, the detection probability maximization problem becomes
\begin{subequations}
  \begin{align}\notag
   \text{(P3)}:\maximize_{\mathbf w} \quad \min_{q\in \mathcal Q} &\quad  \{|\alpha_q|^2|\mathbf a_q^T\mathbf w|^2\}\\ \label{eq:st_SINR_2}
    \text { s.t. }& \quad \frac{|\mathbf h^H \mathbf w|^2}{\sigma_\mathrm{c}^2}\ge \gamma_0,\\\label{eq:st_powe_2}
    &\quad \|\mathbf w\|^2 \le P.
  \end{align}
\end{subequations}

Problem (P3) is a quadratically constrained quadratic program (QCQP), which can be efficiently addressed via the SDR method \cite{5447068}. By defining $\mathbf W = \mathbf w \mathbf w^H$, $\mathbf W \succeq \mathbf 0$, and $\mathrm{rank}(\mathbf W) = 1$ and introducing an auxiliary optimization variable $u$, problem (P3) is reformulated as 
\begin{subequations}
  \begin{align}\notag
   \text{(P3.1)}:\maximize_{\mathbf W,~u} &\quad  u \\
    \text {s.t.}& \quad |\alpha_q|^2\mathrm{tr}(\mathbf W\mathbf a_q^*\mathbf a_q^T)\ge u,\forall q\in \mathcal Q,\\
    & \quad \mathrm{tr}(\mathbf W\mathbf h \mathbf h^H)\ge \gamma_0\sigma_\mathrm{c}^2,\\
    &\quad \mathrm{tr}(\mathbf W) \le P,\\
    &\quad \mathbf W \succeq \mathbf 0,\\ \label{st:rank_one_P3}
    &\quad \mathrm{rank}(\mathbf W) = 1,
  \end{align}
\end{subequations}
which is non-convex due to the rank-one constraint in \eqref{st:rank_one_P3}.
Next, we relax the rank-one constraint in \eqref{st:rank_one_P3} and denote the resulting problem as (P3.2). Problem (P3.2) is convex, and its optimal solution $\mathbf W^\star$ can be efficiently obtained using the convex optimization solver CVX. If $\mathbf W^\star$ satisfies the rank-one constraint, the optimal beamforming vector is directly recovered by performing the EVD of $\mathbf W^\star$. Otherwise, the Gaussian randomization technique is employed to construct a feasible rank-one solution to problem (P3) \cite{5447068}. In particular, we first generate a set of samples according to $\mathbf v \sim \mathcal {CN}(\mathbf 0, \mathbf W^\star)$. To satisfy the communication SNR requirement in \eqref{eq:st_SINR_2}, each sample is rescaled as $\hat{\mathbf w} = \sqrt{\frac{\sigma_\mathrm{c}^2\gamma_0}{|\mathbf h^H \mathbf v|^2}}\mathbf v$.
Subsequently, we verify whether the resulting vector $\hat{\mathbf w}$ satisfies the transmit power constraint given in \eqref{eq:st_powe_2}. Finally, the optimized beamforming vector $\mathbf w$ is chosen as the candidate that yields the largest minimum objective value in problem (P3) among all feasible realizations.

In addition to the general communication channel models, it is noteworthy that when the communication channel $\mathbf h$ is LoS-dominated, the SDR of problem (P3.1) becomes tight. Specifically, there always exists a globally optimal solution $\mathbf W^\star$ to problem (P3.2) such that $\mathrm{rank}(\mathbf W^\star)=1$. In this case, the Gaussian randomization procedure is not needed. Detailed proofs can be found in \cite{10086626,4305444}.

Next, we discuss a special case with $Q=1$ and provide the corresponding closed-form beamforming design and ISAC performance characterization.
\begin{remark}
When $Q=1$, the optimal solution of (P3) can be obtained by solving the Karush-Kuhn-Tucker (KKT) conditions \cite{9652071,11391525}, i.e.,
\begin{equation}
\mathbf w^\star = \begin{cases} \sqrt{P}\frac{\mathbf a_1^*}{\|\mathbf a_1\|}, &\text{if}~\gamma_0\le\frac{P|\mathbf h^H \mathbf a_1^*|^2}{M_\mathrm{t}\sigma_\mathrm{c}^2},\\
x_1 \mathbf e_1 + x_2 \mathbf e_2, \quad &\text{otherwise},
\end{cases}
\end{equation}
where $\mathbf e_1 = \frac{\mathbf h}{\|\mathbf h\|}$, $\mathbf e_2 = \frac{\mathbf a_1^*-\mathbf e_1^H\mathbf a_1^*\mathbf e_1}{\|\mathbf a_1^*-\mathbf e_1^H\mathbf a_1^*\mathbf e_1\|}$, $x_1= \sqrt{\frac{\gamma_0\sigma_\mathrm{c}^2}{\|\mathbf h\|^2}}\frac{\mathbf e_1^H\mathbf a_1^*}{|\mathbf e_1^H\mathbf a_1^*|}$, and $x_2 = \sqrt{P-\frac{\gamma_0\sigma_\mathrm{c}^2}{\|\mathbf h\|^2}} \frac{\mathbf e_2^H\mathbf a_1^*}{|\mathbf e_2^H\mathbf a_1^*|}$. 
It can be observed that when the communication SNR requirement is smaller than the threshold achieved by allocating all transmit signal to the sensing direction, the optimal beamforming solution yields a sensing‑directed beampattern. Conversely, once the communication SNR requirement exceeds that threshold, the optimal beamforming vector lies in the span of the communication channel vector $\mathbf h$ and the steering vector $\mathbf a^*$. Based on the above optimal solution, the maximum detection probability under the SNR constraint $\gamma_0$ is 
\begin{equation}
 P'_{\text{D},1} = \mathcal{Q}_{\chi^2_{2L}\left(0\right)}\left(\frac{\mathcal{Q}^{-1}_{\chi^2_{2L}\left(0\right)}\left(P_{\text{FA},1}\right)}{1+|\alpha_1|^2M_\mathrm{r}/\sigma_\mathrm{s}^2\mathcal P^\star}\right),
\end{equation}
where
\begin{equation}
\begin{split}
\mathcal P^\star
=& \begin{cases} PM_\mathrm{t}, \qquad \qquad \qquad \qquad \qquad \quad  \text{if}~\gamma_0\le\frac{P|\mathbf h^H \mathbf a_1^*|^2}{M_\mathrm{t}\sigma_\mathrm{c}^2},\\
\left(\frac{\sqrt{\gamma_0\sigma_\mathrm{c}^2}|\mathbf h^H\mathbf a_1^*|}{\|\mathbf h\|^2}\right.\\
\left.+\sqrt{\left(P-\frac{\gamma_0\sigma_\mathrm{c}^2}{\|\mathbf h\|^2}\right)\left(M_\mathrm{t}-\frac{|\mathbf h^H\mathbf a_1^*|^2}{\|\mathbf h\|^2}\right)}\right)^2, \quad \text{otherwise}.
\end{cases}
\end{split}
\end{equation}
\end{remark}
When the communication SNR requirement is sufficiently low, i.e, $\gamma_0\le\frac{P|\mathbf h^H \mathbf a_1^*|^2}{M_\mathrm{t}\sigma_\mathrm{c}^2}$, the detection probability is constant:
$
 P'_{\text{D},1} = \mathcal{Q}_{\chi^2_{2L}\left(0\right)}\left(\frac{\mathcal{Q}^{-1}_{\chi^2_{2L}\left(0\right)}\left(P_{\text{FA},1}\right)}{1+P|\alpha_1|^2M_\mathrm{r}M_\mathrm{t}/\sigma_\mathrm{s}^2}\right)
$.
Otherwise, when the communication SNR requirement becomes larger, the detection probability gradually decreases to the value achieved by the communication‑only beampattern, i.e., 
$
 P'_{\text{D},1} = \mathcal{Q}_{\chi^2_{2L}\left(0\right)}\left(\frac{\mathcal{Q}^{-1}_{\chi^2_{2L}\left(0\right)}\left(P_{\text{FA},1}\right)}{1+P|\alpha_1|^2M_\mathrm{r}|\mathbf h^H\mathbf a_1^*|^2/(\|\mathbf h\|^2\sigma_\mathrm{s}^2)}\right)
$.

\section{Simulation Results}\label{sec:numerical_results}
This section presents simulation results to evaluate the detection performance of the proposed NP-based detector and to characterize the ISAC performance boundary achieved by transmit beamforming optimization. The communication channel $\mathbf h$ is modeled as Rician fading with 
$\mathbf h = \sqrt{\frac{K}{K+1}}\mathbf h_\mathrm{los} + \sqrt{\frac{1}{K+1}}\mathbf h_\mathrm{nlos}$,
where $K=1$ is the Rician factor, and $\mathbf h_\mathrm{los}$ and $\mathbf h_\mathrm{nlos}$ are the LoS and Rayleigh fading components, respectively\cite{11391525}. The path-loss model is 
$L(d) = L_0\left(\frac{d}{d_0}\right)^{-\beta_0}$,
where $L_0=-30$~dB is the path-loss at distance $d_0 = 1$~m, $d$ is the transmission distance, and $\beta_0=2.5$ is the path-loss exponent, respectively\cite{11391525}. The waveform carrier frequency is set to $f = 800$~MHz\cite{11087656}. The target’s RCS is set to $\sigma_\mathrm{t}=0.5~\text{m}^2$. The distances between the BS and the sensing region and that between the sensing region and the sensing receiver are set to $d_{1,q}=d_{2,q}=300$~m, $\forall q\in \mathcal{Q}$, respectively. The desired sensing angles with respect to the BS are set to $[-\pi/80, \pi/80]$, with the number of samples being $Q = 50$. The remaining system parameters are set to $M_\mathrm{t} =M_\mathrm{r} =16$, $P = 30$~dBm, $\sigma_\mathrm{c}^2=\sigma_\mathrm{s}^2=-80$~dBm, $d_\mathrm{a}=\lambda/2$, $L=1024$, and $P_{\text{FA},q}=10^{-3},\forall q\in\mathcal Q$, respectively\cite{11391525}. 

\subsection{Sensing Performance Analysis}

In this subsection, we consider the sensing-only scenario and analyze the detection probability at a sampling point as derived in Section~\ref{sec:detection_probability_derivation} under various system configurations. 

\begin{figure}[t] 
        \centering
        \includegraphics[width=0.36\textwidth]{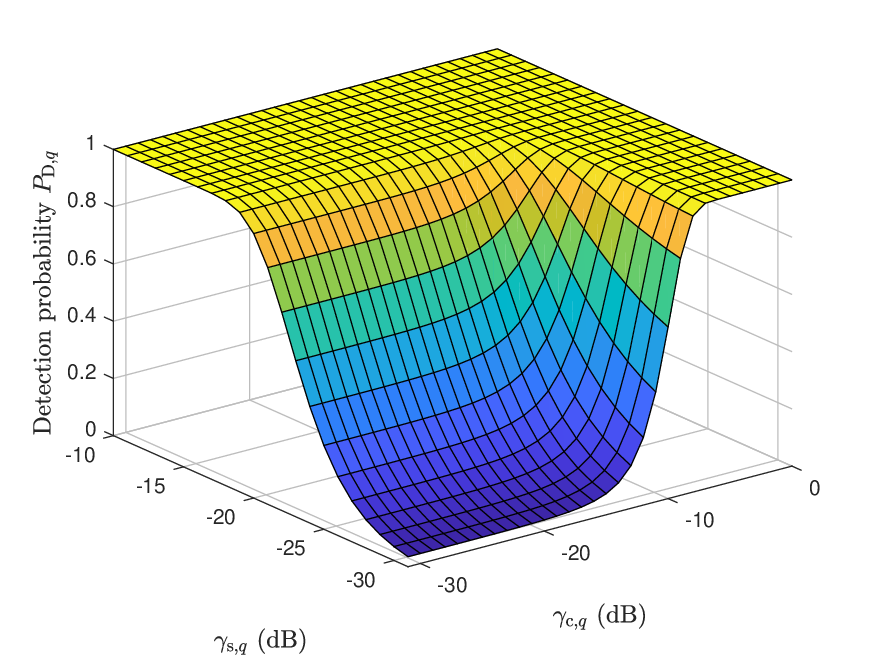}
        \vspace{-5pt}\caption{Detection probability versus the ratios of received deterministic/Gaussian signals powers to the noise power, where $L=1024$.}
        \label{fig:PD_gamma_c_gamma_s}\vspace{-10pt}
\end{figure}

\begin{figure}[t] 
        \centering
        \includegraphics[width=0.36\textwidth]{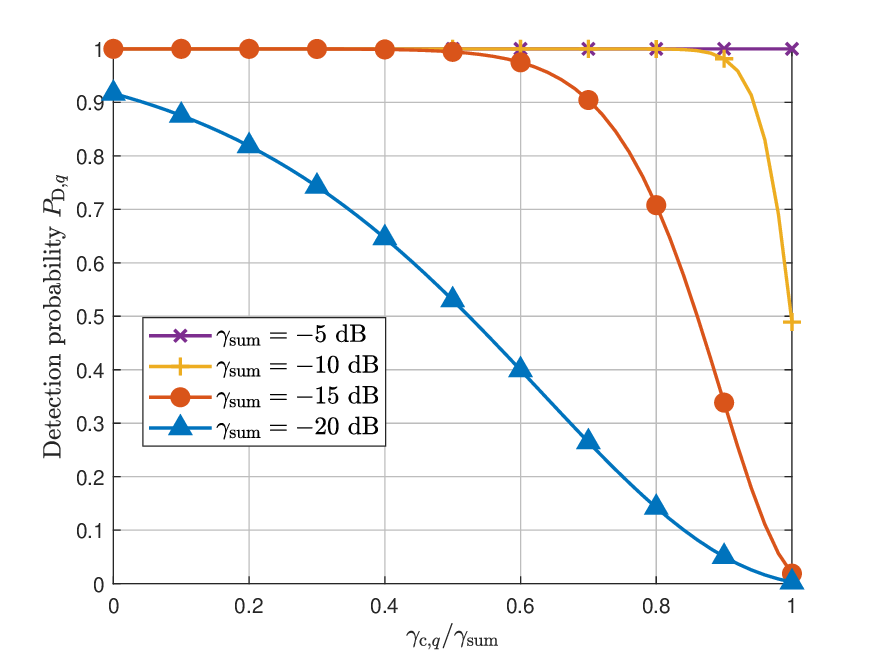}
        \vspace{-5pt}\caption{Detection probability versus the ratio of received deterministic sensing SNR to the sum SNR, where $L=1024$.}
        \label{fig:PD_gamma}\vspace{-10pt}
\end{figure}

First, we evaluate the sensing performance when the BS transmits a combination of Gaussian-information‑bearing and deterministic sensing signals. Fig.~\ref{fig:PD_gamma_c_gamma_s} illustrates the target detection probability $P_{\mathrm{D},q}$ versus the sensing SNRs $\gamma_{\mathrm{c},q}$ and $\gamma_{\mathrm{s},q}$. It can be observed that increasing either the deterministic or Gaussian signal power enhances the detection probability, since both signal components contribute useful target-related information to the received echo signals. 
Meanwhile, within a given region, increasing the sensing SNR $\gamma_{\mathrm{c},q}$ or $\gamma_{\mathrm{s},q}$ leads to a sharp increase in the detection probability. This indicates that even a marginal improvement in sensing SNR can result in a substantial gain in detection probability. The underlying reason is that the detection probability follows the integral form of the right-tail probability of a chi‑squared distribution, whose tail decays exponentially.
Furthermore, when either $\gamma_{\mathrm{c},q}$ or $\gamma_{\mathrm{s},q}$ is sufficiently small (e.g., $-30$ dB), the minimum $\gamma_{\mathrm{s},q}$ (approximately $-15~\text{dB}$) required to achieve a detection probability of one is smaller than the corresponding $\gamma_{\mathrm{c},q}$ threshold (approximately $-5~\text{dB}$). This observation indicates that deterministic sensing signals outperform Gaussian-information-bearing signals for sensing as they enable sequence-based correlation at the receiver, in contrast to the energy-based comparison with random signals.

In practical sensing systems, the transmitter's power constraint limits the feasible operating region of the sensing SNRs, $\gamma_{\mathrm{c},q}$ and $\gamma_{\mathrm{s},q}$. Consequently, we analyze the detection probability under a fixed total SNR budget, i.e., $\gamma_{\mathrm{c},q} + \gamma_{\mathrm{s},q} = \gamma_\mathrm{sum}$. Fig.~\ref{fig:PD_gamma} shows the detection probability versus the ratio $\gamma_{\mathrm{c},q} / \gamma_\mathrm{sum}$. When $\gamma_\mathrm{sum}$ is low, e.g., $-20$~dB, allocating more power to the Gaussian-information-bearing component $\gamma_{\mathrm{c},q}$ decreases the detection probability. In this regime, exploiting superimposed signals yields a detection probability that is superior to that of exploiting only Gaussian-information‑bearing signals but inferior to that attained with only deterministic sensing signals. This result confirms that the inherent randomness of Gaussian-information-bearing signals degrades detection performance. Conversely, when $\gamma_\mathrm{sum}$ is sufficiently large ($-5$~dB), the detection probabilities under various values of $\gamma_{\mathrm{c},q}$ all equal one. This result demonstrates the feasibility of leveraging existing communication signals and system architecture for joint target detection when the  sensing SNR is sufficiently large.

\begin{figure}[t] 
        \centering
        \includegraphics[width=0.36\textwidth]{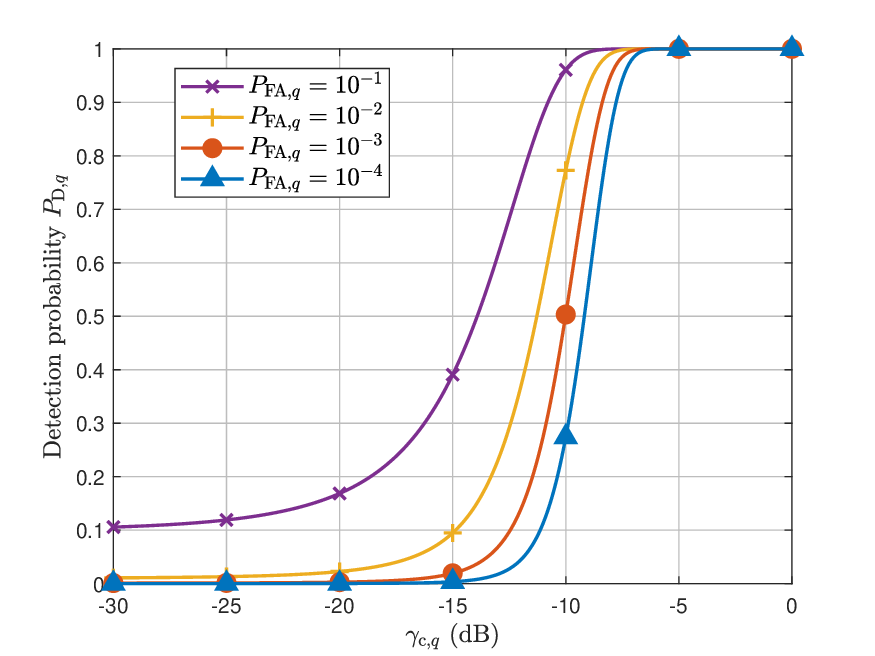}
        \vspace{-5pt}\caption{Detection probability versus the received sensing SNR $\gamma_{\mathrm{c},q}$ under various false-alarm probabilities, where $L=1024$.}
        \label{fig:PD_gamma_FA}\vspace{-10pt}
\end{figure}

Next, we investigate the detection performance when the BS transmits only Gaussian-information-bearing signals. Fig.~\ref{fig:PD_gamma_FA} shows the detection probability $P_{\mathrm{D},q}$ versus the received sensing SNR $\gamma_{\mathrm{c},q}$ under various false-alarm probabilities. As expected from Remark~\ref{re:P_D_random}, the detection probability increases monotonically with the sensing SNR and exhibits a cliff-like increase as the sensing SNR increases from $-15$~dB to $-5$~dB, owing to the integral form of the detection probability. Furthermore, as the false-alarm probability decreases from $10^{-1}$ to $10^{-4}$, the minimum SNR required for the detection probability to approach one increases from $-7.5$~dB to $2.5$~dB. This result is due to the fact that a stricter false-alarm constraint necessitates a higher decision threshold to reduce the likelihood of declaring a target under the null hypothesis. Consequently, a higher sensing SNR is required to ensure that the test statistic exceeds the threshold. These results delineate the SNR region required for reliable sensing in practical systems.

\begin{figure}[t] 
        \centering
        \includegraphics[width=0.36\textwidth]{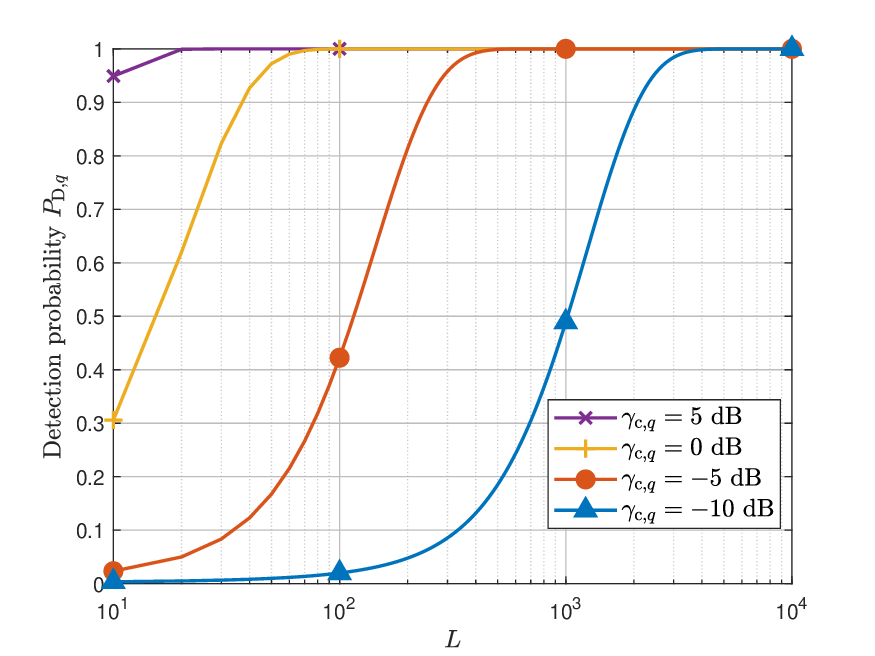}
        \vspace{-5pt}\caption{Detection probability versus the sensing duration time $L$ under various sensing SNRs, where $P_{\text{FA},q}=10^{-3}$.}
        \label{fig:PD_gamma_L}\vspace{-10pt}
\end{figure}
Fig.~\ref{fig:PD_gamma_L} shows the detection probability versus the sensing duration $L$ for various sensing SNRs at a fixed false-alarm probability $P_{\text{FA},q}=10^{-3}$. Increasing the sensing duration significantly improves the detection probability. This is because the energy-based detector in \eqref{eq:detector_Gaussian} accumulates the received signal energy over the entire sensing interval. Consequently, a longer sensing duration amplifies the distinction of test statistic between these two hypotheses and enables more reliable detection. This demonstrates that an extended sensing duration can compensate for a lower sensing SNR. In particular, to achieve a detection probability of one, the minimum required number of sensing symbols are approximately $30$, $90$, $600$, and $4500$ when the sensing SNR is $5$~dB, $0$~dB, $-5$~dB, and $-10$~dB, respectively. This result illustrates a fundamental trade-off between sensing time and sensing SNR under a fixed detection probability requirement.

\begin{figure}[t] 
        \centering
        \includegraphics[width=0.36\textwidth]{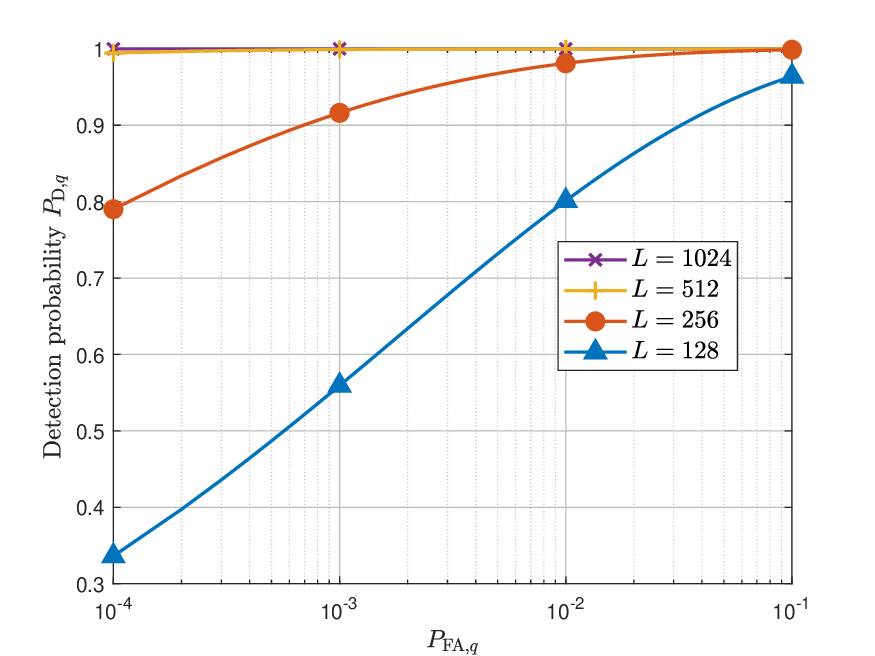}
        \caption{Detection probability versus the sensing duration time $L$ under various false-alarm probabilities, where $\gamma_{\mathrm{c},q}=-5$~dB.}
        \label{fig:PD_L_FA}\vspace{-10pt}
\end{figure}
Fig.~\ref{fig:PD_L_FA} illustrates the relationship between detection probability and false-alarm constraints under different sensing duration $L$ when $\gamma_{\mathrm{c},q} = -5$ dB. The results exhibit the classic detection trade-off: detection probability increases with the permissible false-alarm probability. This trade-off is pronounced for short sensing durations (e.g., $L = 128, 256$), as the likelihood functions under these two hypotheses are broad and overlap significantly. However, for longer durations (e.g., $L = 512, 1024$), increases in false-alarm probability yield only marginal gains in detection performance, because the likelihood functions become concentrated according to the central limit theorem.

\subsection{ISAC Performance Analysis}
Next, we explore the trade-off between detection probability and communication-rate in the ISAC scenario. The proposed transmit beamforming design is compared with the following benchmark schemes.

\subsubsection{Target Detection Utilizing only Deterministic Sensing Signal}
The BS simultaneously transmits both deterministic sensing and Gaussian-information-bearing signals. The sensing receiver only leverages the deterministic sensing signal for target detection and treats the Gaussian-information-bearing signal as interference. The traditional matched-filter is applied to detect the presence of the target. In this case, the transmit beamforming at the BS is optimized to maximize the minimum power of deterministic signals toward the sensing region, subject to the minimum communication SINR requirement and the maximum transmit power budget.

\subsubsection{Time Switching} This scheme employs a time-division strategy, where the BS transmits deterministic signals for sensing and Gaussian signals for communication in separate phases. The sensing receiver performs target detection in the sensing phase. By adjusting the time allocation $L_\mathrm{s}\ge0$ and $L_\mathrm{c}\ge0$ to sensing and communication, respectively,  with $L_\mathrm{s}+L_\mathrm{c}=L$, the average communication rate $R' = (L_\mathrm{c}/L)R$ is chosen to satisfy the minimum rate requirement.

\subsubsection{Beampattern Gain Maximization \cite{10086626}}
The BS transmits both types of signals and the transmit beamforming design at the BS is optimized to maximize the minimum beampattern gain across the sensing region, under the constraints of the authorized UAV's SINR requirement and the BS's maximum transmit power. Accordingly, the proposed detector is utilized to exploit both signal types for sensing performance enhancement.

\begin{figure}[t]
        \centering
        \includegraphics[width=0.36\textwidth]{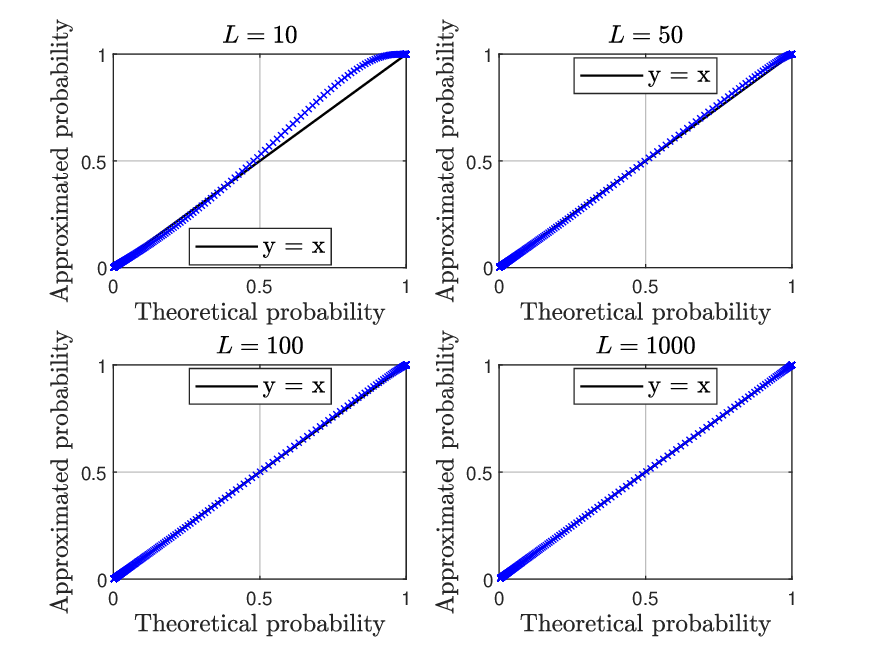}
        \vspace{-5pt}\caption{Quantile-quantile plot comparing the approximated and theoretical detection probabilities over the SNR range $\gamma_{\mathrm{c},q}, \gamma_{\mathrm{s},q} \in [-40, 10]~\text{dB}$.}
        \label{fig:approximation}\vspace{-10pt}
\end{figure}
\begin{figure}[t]
        \centering
        \includegraphics[width=0.36\textwidth]{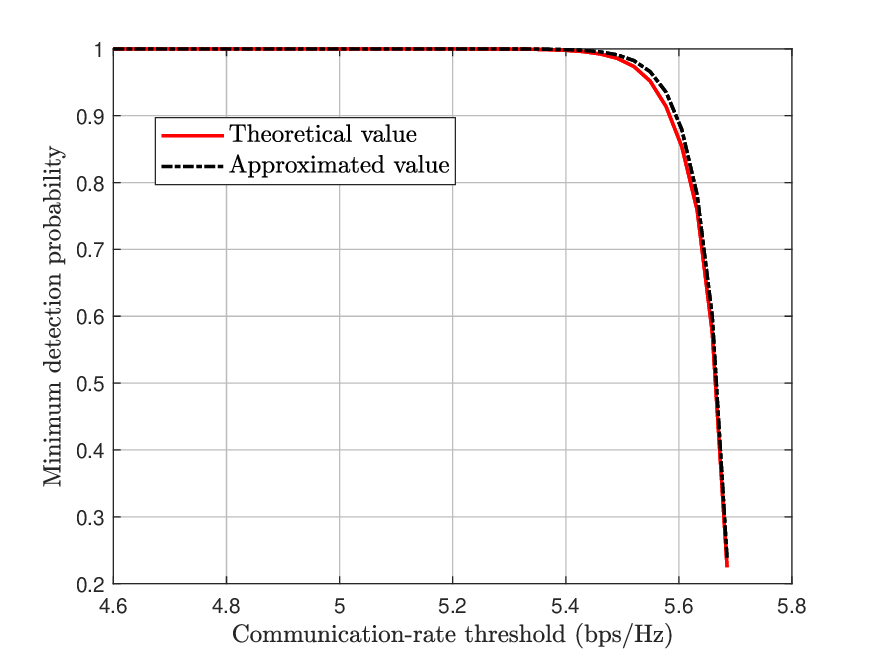}
        \vspace{-5pt}\caption{Comparison between minimum theoretical and approximated target detection probabilities over the sensing region $\mathcal Q$ versus the achievable communication-rate threshold.}
        \label{fig:PD_comparison}\vspace{-10pt}        
\end{figure}

Fig.~\ref{fig:approximation} validates the accuracy of the approximations in Proposition~\ref{prop:approximation_L} under various values of the sensing duration $L$. For $L=10$ and $L=50$, the approximated quantiles exhibit significant and slight bias relative to their theoretical counterparts, respectively, particularly within the upper interval $[0.5,1]$. In contrast, for $L=100$ and $L=1000$, the quantile-quantile points adhere closely to the $y=x$ reference line across all tested values of $\gamma_{\mathrm{c},q}$ and $\gamma_{\mathrm{s},q}$. These observations confirm that the proposed approximations provide high accuracy when the ISAC duration $L$ is sufficiently large.

Fig.~\ref{fig:PD_comparison} illustrates the minimum theoretical detection probability over the sensing region $\mathcal Q$ alongside its approximation under the proposed beamforming design. It is shown that when the proposed approximation is employed as the objective for transmit beamforming optimization, the resulting exact detection probability closely approaches the approximated curve. This result further verifies the approximation’s accuracy and demonstrates its suitability for practical ISAC design.

\begin{figure}[t]
        \centering
        \includegraphics[width=0.36\textwidth]{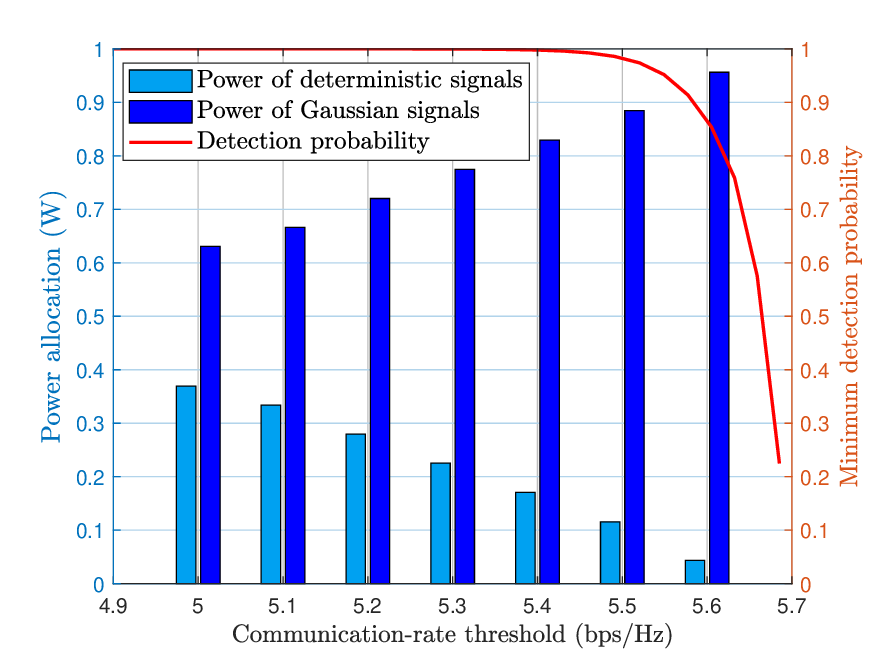}
        \vspace{-5pt}\caption{Power allocations between deterministic and Gaussian signals, and the corresponding minimum detection probability of the proposed design versus the achievable communication-rate threshold.}
        \label{fig:power_allocation}\vspace{-10pt}
\end{figure}
\begin{figure}[t]
        \centering
        \includegraphics[width=0.36\textwidth]{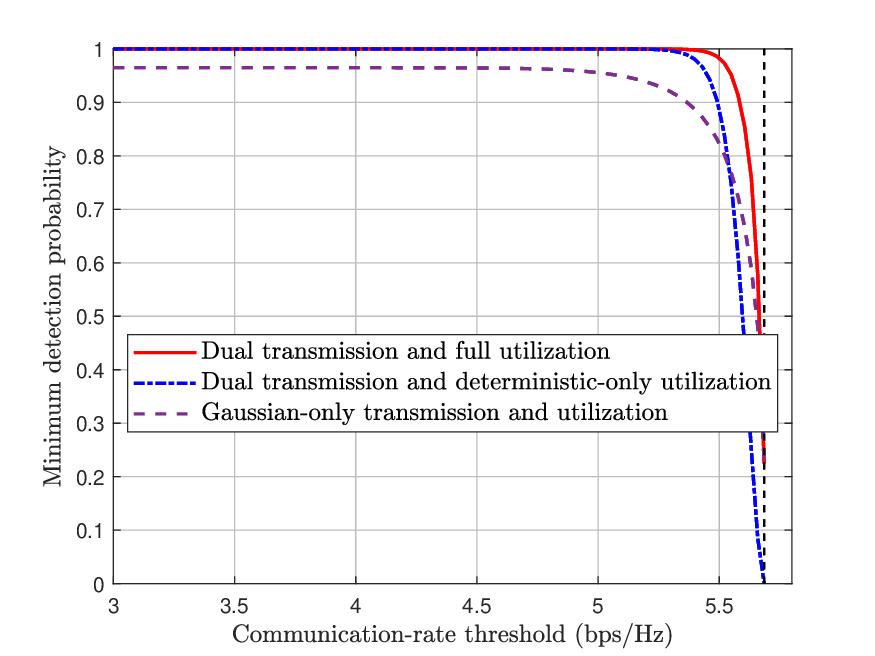}       
        \vspace{-5pt}\caption{Comparison of minimum detection probability over the sensing region $\mathcal{Q}$ for different signal models and detectors.}
        \label{fig:PD_detector}\vspace{-10pt}        
\end{figure}
\begin{figure}[t]
        \centering
        \includegraphics[width=0.36\textwidth]{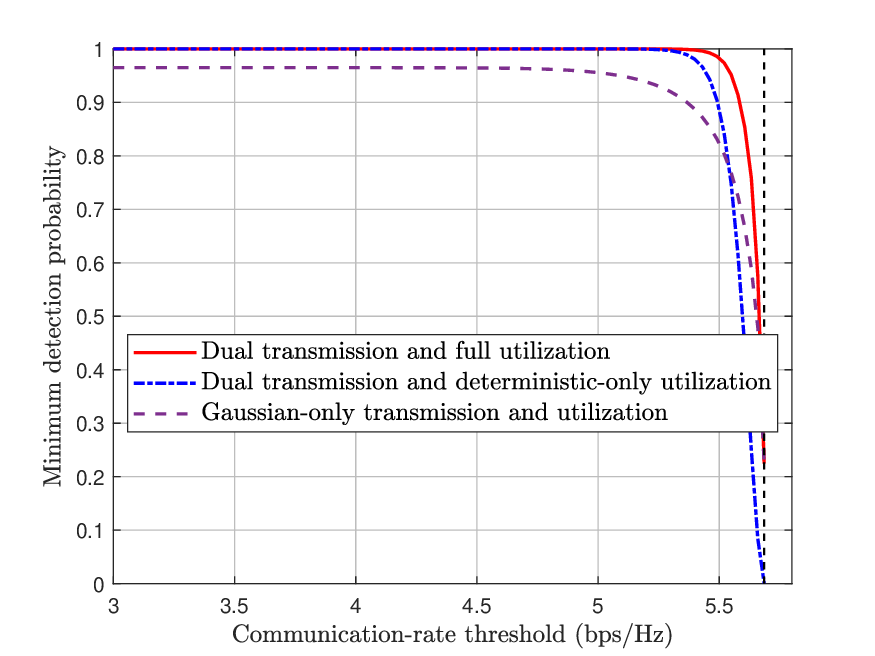}        
        \vspace{-5pt}\caption{Comparison of minimum detection probability over the sensing region $\mathcal{Q}$ for different different beamforming design methods.}
        \label{fig:PD_BF}\vspace{-10pt}        
\end{figure}

Fig.~\ref{fig:power_allocation} shows the power allocations between deterministic and random signals, as well as the trade-off between the minimum detection probability over the sensing region $\mathcal Q$ and the communication-rate threshold. It is observed that increasing the communication-rate threshold shifts the power allocation from deterministic sensing to Gaussian-information-bearing signals, which strengthens the received information signals and suppresses interference from the sensing components. Meanwhile, as the communication-rate requirement becomes stringent, the transmit beamforming design directs more power toward the authorized UAV. These two effects increase the signal randomness and reduce the power of the signals steered to the sensing target, collectively degrading the detection probability. These results demonstrate the performance trade-off between achievable communication rate and detection probability in the ISAC system.

Fig.~\ref{fig:PD_detector} illustrates the trade-off between the minimum target detection probability over the sensing region $\mathcal Q$ and achievable communication rate under various signal models and detectors. It is observed that the detection probabilities of all schemes decrease as the minimum communication rate requirement increases. The proposed detector that employs superimposed deterministic and Gaussian signals, consistently outperforms the other benchmark schemes. At a low communication rate threshold, most transmit power is allocated to deterministic sensing signals, allowing phase-coherent accumulation of target echo signals, thus achieving superior detection probability at these methods utilizing deterministic signals. As the communication rate requirement increases, more power is allocated to Gaussian-information-bearing signals. Consequently, these two proposed detectors that utilizing Gaussian signals achieves a higher detection probability than the benchmark relying on only deterministic signals.

Fig.~\ref{fig:PD_BF} compares the target detection probabilities achieved by different beamforming design strategies. The proposed design consistently outperforms the time-switching and beampattern gain maximization benchmark schemes, with the advantage becoming more pronounced under stringent communication rate requirements. This confirms the necessity of dedicated beamforming design when exploiting both types of signals for detection, as deterministic and Gaussian signals contribute differently to detection performance, as shown in Fig.~\ref{fig:PD_gamma_c_gamma_s}.

\section{Conclusion}\label{sec:conclusion}

This paper investigated the ISAC performance of a system involving authorized-UAV communication and bistatic unauthorized-UAV detection, where a BS simultaneously transmits a superimposed waveform consisting of deterministic sensing and Gaussian-information-bearing signals. To fully exploit these superimposed signals, we introduced an NP-based detector that jointly leverages both signal components and derived the corresponding detection probability. Furthermore, by formulating a beamforming optimization problem, we characterized the fundamental ISAC performance boundary. Simulation results demonstrated that both types of signals contribute to target detection. The non-trivial sensing-communication trade-off highlights the importance of developing adaptive waveform and beamforming design strategies for practical ISAC systems.
\appendices

\section{Proof of the Approximation in Proposition~\ref{prop:approximation_L}} \label{app:proof_of_the_approximation}
First, for a sufficiently large sensing duration $L$, the central limit theorem implies that $T'(\tilde{\mathbf y};\mathcal{H}_0)$ and $T'(\tilde{\mathbf y};\mathcal{H}_1)$ asymptotically follow Gaussian distributions, i.e., $T'(\tilde{\mathbf y};\mathcal{H}_0)\sim \mathcal{N}(\nu_1+\lambda_1,2\nu_1+4\lambda_1)$ and $T'(\tilde{\mathbf y};\mathcal{H}_1)\sim \mathcal{N}(\nu_2+\lambda_2,2\nu_2+4\lambda_2)$. In this case, the false-alarm probability is approximated as
\begin{equation}\label{eq:P_FA_delta_appro}
\begin{split}
 P_{\text{FA},q} &\approx  \bar{P}_{\text{FA},q}\\
 & = \mathrm{Pr}\left\{T'(\tilde{\mathbf y};\mathcal{H}_0)\ge \frac{2(1+\gamma_{\mathrm{c},q})\delta'}{\gamma_{\mathrm{c},q}} + \frac{2L\gamma_{\mathrm{s},q}}{\gamma_{\mathrm{c},q}^2}\right\}\\
 & = Q\left( \frac{\frac{(1+\gamma_{\mathrm{c},q})\delta'}{\gamma_{\mathrm{c},q}} - L}{\sqrt{L+\lambda_1}}\right).
 \end{split}
\end{equation}
Similarly, the detection probability is approximated as
\begin{equation}\label{eq:P_D_delta_appro}
\begin{split}
 P_{\text{D},q} &\approx  \bar{P}_{\text{D},q}\\
 & = \mathrm{Pr}\left\{T'(\tilde{\mathbf y};\mathcal{H}_1)\ge \frac{2\delta'}{\gamma_{\mathrm{c},q}} + \frac{2L\gamma_{\mathrm{s},q}}{(1+\gamma_{\mathrm{c},q})\gamma_{\mathrm{c},q}^2}\right\}\\
 &= Q\left( \frac{\frac{(1+\gamma_{\mathrm{c},q})\delta'-L\gamma_{\mathrm{s},q}(\gamma_{\mathrm{c},q}+2)}{\gamma_{\mathrm{c},q} (1+\gamma_{\mathrm{c},q})} - L}{\sqrt{L+\lambda_2}}\right).
 \end{split}
\end{equation}
Finally, based on \eqref{eq:P_FA_delta_appro} and \eqref{eq:P_D_delta_appro}, the detection probability under a given false-alarm probability is approximated as
\begin{equation}\label{eq:approximation_I}
\begin{split}
&\bar{P}_{\text{D},q} \\
=&~  Q\left(\frac{Q^{-1}(\bar{P}_{\text{FA},q})\sqrt{\gamma_{\mathrm{c},q}^2\!+\!2\gamma_{\mathrm{s},q}}\!-\!\sqrt{L}\left(\gamma_{\mathrm{c},q}^2\!+\!\gamma_{\mathrm{s},q}(2\!+\!\gamma_{\mathrm{c},q})\right)}{(1+\gamma_{\mathrm{c},q})\sqrt{\gamma_{\mathrm{c},q}^2+2\gamma_{\mathrm{s},q}(1+\gamma_{\mathrm{c},q})}}\right).
\end{split}
\end{equation}
Thus, the approximation in $(a_{1})$ follows immediately.
Next, when $\gamma_{\mathrm{c},q} \ll 1$, the terms $1 + \gamma_{\mathrm{c},q}$ and $2 + \gamma_{\mathrm{c},q}$ can be well-approximated by $1$ and $2$, respectively. In this case, the detection probability in \eqref{eq:approximation_I} is further approximated as 
\begin{equation}\label{eq:approximation_II}
\begin{split}
\bar{P}_{\text{D},q} & \approx  Q\left(\frac{Q^{-1}(\bar{P}_{\text{FA},q})\sqrt{\gamma_{\mathrm{c},q}^2+2\gamma_{\mathrm{s},q}}-\sqrt{L}\left(\gamma_{\mathrm{c},q}^2+2\gamma_{\mathrm{s},q}\right)}{\sqrt{\gamma_{\mathrm{c},q}^2+2\gamma_{\mathrm{s},q}}}\right)\\
& = Q\left(Q^{-1}(\bar{P}_{\text{FA},q})-\sqrt{L}\sqrt{\gamma_{\mathrm{c},q}^2+2\gamma_{\mathrm{s},q}}\right)\\
& \approx Q\left(Q^{-1}(P_{\text{FA},q})-\sqrt{L}\sqrt{\gamma_{\mathrm{c},q}^2+2\gamma_{\mathrm{s},q}}\right)\\
&\triangleq \tilde P_{\mathrm{D},q}. 
\end{split}
\end{equation}
Next, it is observed that when $\gamma_{\mathrm{c},q}$ is sufficiently large or approaches unity, the numerical values of the detection probabilities $\bar{P}_{\text{D},q}$ and $\tilde P_{\mathrm{D},q}$ all approach unity. Thus, the approximation in \eqref{eq:approximation_II} remains valid for $\gamma_{\mathrm{c},q} \ge 1$ or $\gamma_{\mathrm{c},q} \rightarrow 1$. Therefore, the approximation in $(a_{2})$ is obtained, which completes the proof.
\ifCLASSOPTIONcaptionsoff
  \newpage
\fi
\bibliographystyle{IEEEtran}
\bibliography{IEEEabrv,mybibfile}

\begin{thebibliography}{10}
\providecommand{\url}[1]{#1}
\csname url@samestyle\endcsname
\providecommand{\newblock}{\relax}
\providecommand{\bibinfo}[2]{#2}
\providecommand{\BIBentrySTDinterwordspacing}{\spaceskip=0pt\relax}
\providecommand{\BIBentryALTinterwordstretchfactor}{4}
\providecommand{\BIBentryALTinterwordspacing}{\spaceskip=\fontdimen2\font plus
\BIBentryALTinterwordstretchfactor\fontdimen3\font minus
  \fontdimen4\font\relax}
\providecommand{\BIBforeignlanguage}[2]{{%
\expandafter\ifx\csname l@#1\endcsname\relax
\typeout{** WARNING: IEEEtran.bst: No hyphenation pattern has been}%
\typeout{** loaded for the language `#1'. Using the pattern for}%
\typeout{** the default language instead.}%
\else
\language=\csname l@#1\endcsname
\fi
#2}}
\providecommand{\BIBdecl}{\relax}
\BIBdecl

\bibitem{song2025detection}
X.~Song, X.~Yu, J.~Xu, and D.~W.~K. Ng, ``Detection in bistatic {ISAC} with
  deterministic sensing and {Gaussian} information signals,'' in \emph{Proc.
  IEEE Int. Conf. Commun. (ICC) 2026}, Glasgow, Scotland, UK, May 2026.

\bibitem{10955337}
Y.~Jiang, X.~Li, G.~Zhu, H.~Li, J.~Deng, K.~Han, C.~Shen, Q.~Shi, and R.~Zhang,
  ``Integrated sensing and communication for low altitude economy:
  Opportunities and challenges,'' \emph{IEEE Commun. Mag.}, vol.~63, no.~12,
  pp. 72--78, Dec. 2025.

\bibitem{11098638}
Y.~Song, Y.~Zeng, Y.~Yang, Z.~Ren, G.~Cheng, X.~Xu, J.~Xu, S.~Jin, and
  R.~Zhang, ``An overview of cellular {ISAC} for low-altitude {UAV}: New
  opportunities and challenges,'' \emph{IEEE Commun. Mag.}, vol.~63, no.~12,
  pp. 88--95, Dec. 2025.

\bibitem{11131292}
Y.~Wang, G.~Sun, Z.~Sun, J.~Wang, J.~Li, C.~Zhao, J.~Wu, S.~Liang, M.~Yin,
  P.~Wang, D.~Niyato, S.~Sun, and D.~I. Kim, ``Toward realization of
  low-altitude economy networks: Core architecture, integrated technologies,
  and future directions,'' \emph{IEEE Trans. Cognit. Commun. Netw.}, vol.~11,
  no.~5, pp. 2788--2820, Oct. 2025.

\bibitem{9696263}
S.~D. Muruganathan, X.~Lin, H.-L. Määttänen, J.~Sedin, Z.~Zou, W.~A.
  Hapsari, and S.~Yasukawa, ``An overview of {3GPP} release-15 study on
  enhanced {LTE} support for connected drones,'' \emph{IEEE Commun. Standards
  Mag.}, vol.~5, no.~4, pp. 140--146, Dec. 2021.

\bibitem{3GPP_ISAC}
\BIBentryALTinterwordspacing
{3GPP}, ``{3GPP TR 38.765 V1.0.0}: Study on integrated sensing and
  communication {(ISAC)} for {NR},'' Feb. 2026. [Online]. Available:
  \url{https://portal.3gpp.org/desktopmodules/Specifications/SpecificationDetails.aspx?specificationId=4446}
\BIBentrySTDinterwordspacing

\bibitem{9737357}
F.~Liu, Y.~Cui, C.~Masouros, J.~Xu, T.~X. Han, Y.~C. Eldar, and S.~Buzzi,
  ``Integrated sensing and communications: Toward dual-functional wireless
  networks for {6G} and beyond,'' \emph{IEEE J. Sel. Areas Commun.}, vol.~40,
  no.~6, pp. 1728--1767, Jun. 2022.

\bibitem{song2025overview}
X.~Song, Y.~Fang, F.~Wang, Z.~Ren, X.~Yu, Y.~Zhang, F.~Liu, J.~Xu, D.~W.~K. Ng,
  R.~Zhang, and S.~Cui, ``An overview on {IRS-enabled} sensing and
  communications for {6G}: architectures, fundamental limits, and joint
  beamforming designs,'' \emph{Sci. China Inf. Sci.}, vol.~68, no.~5, p.
  150301, Apr. 2025.

\bibitem{11358925}
D.~Zhang, Y.~Cui, X.~Cao, N.~Su, Y.~Gong, F.~Liu, W.~Yuan, X.~Jing,
  J.~Andrew~Zhang, J.~Xu, C.~Masouros, D.~Niyato, and M.~Di~Renzo, ``Integrated
  sensing and communications over the years: An evolution perspective,''
  \emph{IEEE Commun. Surveys Tuts.}, vol.~28, pp. 5014--5048, Mar. 2026.

\bibitem{richards2005fundamentals}
M.~A. Richards, \emph{Fundamentals of Radar Signal Processing}.\hskip 1em plus
  0.5em minus 0.4em\relax Mcgraw-hill New York, 2005, vol.~1.

\bibitem{steven1993fundamentals}
S.~M. {}Kay, \emph{Fundamentals of Statistical Signal Processing Volume II:
  Detection Theory}.\hskip 1em plus 0.5em minus 0.4em\relax PTR Prentice-Hall,
  Englewood Cliffs, NJ, 1993.

\bibitem{1703855}
I.~Bekkerman and J.~Tabrikian, ``Target detection and localization using {MIMO}
  radars and sonars,'' \emph{IEEE Trans. Signal Process.}, vol.~54, no.~10, pp.
  3873--3883, Oct. 2006.

\bibitem{goldsmith2005wireless}
A.~Goldsmith, \emph{Wireless Communications}.\hskip 1em plus 0.5em minus
  0.4em\relax Cambridge University Press, 2005.

\bibitem{10977743}
C.~Zhao, Y.~Feng, H.~Luo, F.~Gao, F.~Liu, and S.~Jin, ``Networked {ISAC-based}
  {UAV} tracking and handover toward low-altitude economy,'' \emph{IEEE Trans.
  Wireless Commun.}, vol.~24, no.~9, pp. 7670--7685, Sep. 2025.

\bibitem{11296935}
Y.~Zheng, L.~Li, W.~Lin, W.~Liang, Q.~Du, and Z.~Han, ``Optimal transport
  framework for {ISAC} in low-altitude networks: Joint resource allocation for
  cooperative communication and non-cooperative localization,'' \emph{IEEE
  Trans. Commun.}, vol.~74, pp. 1984--2000, Jan. 2026.

\bibitem{10879807}
G.~Cheng, X.~Song, Z.~Lyu, and J.~Xu, ``Networked {ISAC} for low-altitude
  economy: Coordinated transmit beamforming and {UAV} trajectory design,''
  \emph{IEEE Trans. Commun.}, vol.~73, no.~8, pp. 5832--5847, Aug. 2025.

\bibitem{11072035}
X.~Ye, Y.~Mao, X.~Yu, S.~Sun, L.~Fu, and J.~Xu, ``Integrated sensing and
  communications for low-altitude economy: A deep reinforcement learning
  approach,'' \emph{IEEE Trans. Wireless Commun.}, vol.~25, pp. 351--367, Feb.
  2026.

\bibitem{11404407}
J.~Xu, X.~Zhou, H.~Zhang, and Y.~Li, ``Deep learning-based predictive
  bidirectional beamforming in {ISAC-enabled} {UAV} networks,'' \emph{IEEE
  Trans. Wireless Commun.}, vol.~25, pp. 12\,230--12\,245, Feb. 2026.

\bibitem{10147248}
Y.~Xiong, F.~Liu, Y.~Cui, W.~Yuan, T.~X. Han, and G.~Caire, ``On the
  fundamental tradeoff of integrated sensing and communications under
  {Gaussian} channels,'' \emph{IEEE Trans. Inf. Theory}, vol.~69, no.~9, pp.
  5723--5751, Sep. 2023.

\bibitem{10206462}
F.~Liu, Y.~Xiong, K.~Wan, T.~X. Han, and G.~Caire, ``Deterministic-random
  tradeoff of integrated sensing and communications in {Gaussian} channels: A
  rate-distortion perspective,'' in \emph{Proc. 2023 IEEE Inte. Symp. Inf.
  Theory (ISIT)}, Taipei, Taiwan, Jun. 2023, pp. 2326--2331.

\bibitem{10596930}
S.~Lu, F.~Liu, F.~Dong, Y.~Xiong, J.~Xu, Y.-F. Liu, and S.~Jin, ``Random {ISAC}
  signals deserve dedicated precoding,'' \emph{IEEE Trans. Signal Process.},
  vol.~72, pp. 3453--3469, Jul. 2024.

\bibitem{10645253}
S.~Lu, F.~Liu, F.~Dong, Y.~Xiong, and K.~Guan, ``Optimal precoding toward
  random {ISAC} signals,'' \emph{IEEE Trans. Veh. Technol.}, vol.~73, no.~12,
  pp. 19\,884--19\,889, Dec. 2024.

\bibitem{10977963}
L.~Xie, F.~Liu, J.~Luo, and S.~Song, ``Sensing mutual information with random
  signals in {Gaussian} channels,'' \emph{IEEE Trans. Commun.}, vol.~73,
  no.~10, pp. 9437--9452, Oct. 2025.

\bibitem{11087656}
F.~Liu, Y.~Zhang, Y.~Xiong, S.~Li, W.~Yuan, F.~Gao, S.~Jin, and G.~Caire,
  ``{CP-OFDM} achieves the lowest average ranging sidelobe under {QAM/PSK}
  constellations,'' \emph{IEEE Trans. Inf. Theory}, vol.~71, no.~9, pp.
  6950--6967, Sep. 2025.

\bibitem{11391525}
X.~Song, X.~Yu, J.~Xu, and D.~W.~K. Ng, ``{CRB-rate} tradeoff for bistatic
  {ISAC} with {Gaussian} information and deterministic sensing signals,''
  \emph{IEEE Trans. Wireless Commun.}, vol.~25, pp. 11\,768--11\,782, Feb.
  2026.

\bibitem{xie2025bistatic}
\BIBentryALTinterwordspacing
L.~Xie, F.~Liu, S.~Song, and S.~Jin, ``Bistatic target detection by exploiting
  both deterministic pilots and unknown random data payloads,'' Aug. 2025.
  [Online]. Available: \url{https://arxiv.org/pdf/2508.18728}
\BIBentrySTDinterwordspacing

\bibitem{10380513}
G.~Cheng, Y.~Fang, J.~Xu, and D.~W.~K. Ng, ``Optimal coordinated transmit
  beamforming for networked integrated sensing and communications,'' \emph{IEEE
  Trans. Wireless Commun.}, vol.~23, no.~8, pp. 8200--8214, Aug. 2024.

\bibitem{sherman1950adjustment}
J.~Sherman and W.~J. Morrison, ``Adjustment of an inverse matrix corresponding
  to a change in one element of a given matrix,'' \emph{Ann. Math. Statist.},
  vol.~21, no.~1, pp. 124--127, Mar. 1950.

\bibitem{cvx}
\BIBentryALTinterwordspacing
M.~Grant and S.~Boyd, ``{CVX}: Matlab software for disciplined convex
  programming, version 2.1,'' Mar. 2014. [Online]. Available:
  \url{http://cvxr.com/cvx}
\BIBentrySTDinterwordspacing

\bibitem{9124713}
X.~Liu, T.~Huang, N.~Shlezinger, Y.~Liu, J.~Zhou, and Y.~C. Eldar, ``Joint
  transmit beamforming for multiuser {MIMO} communications and {MIMO} radar,''
  \emph{IEEE Trans. Signal Process.}, vol.~68, pp. 3929--3944, Jun. 2020.

\bibitem{9652071}
F.~Liu, Y.-F. Liu, A.~Li, C.~Masouros, and Y.~C. Eldar, ``{Cram\'er-Rao} bound
  optimization for joint radar-communication beamforming,'' \emph{IEEE Trans.
  Signal Process.}, vol.~70, pp. 240--253, Jan. 2022.

\bibitem{5447068}
Z.-Q. Luo, W.-K. Ma, A.~M.-C. So, Y.~Ye, and S.~Zhang, ``Semidefinite
  relaxation of quadratic optimization problems,'' \emph{IEEE Signal Process.
  Mag.}, vol.~27, no.~3, pp. 20--34, May 2010.

\bibitem{pataki1998rank}
G.~Pataki, ``On the rank of extreme matrices in semidefinite programs and the
  multiplicity of optimal eigenvalues,'' \emph{Math. Oper. Res.}, vol.~23,
  no.~2, pp. 339--358, May 1998.

\bibitem{10086626}
H.~Hua, J.~Xu, and T.~X. Han, ``Optimal transmit beamforming for integrated
  sensing and communication,'' \emph{IEEE Trans. Veh. Technol.}, vol.~72,
  no.~8, pp. 10\,588--10\,603, Aug. 2023.

\bibitem{4305444}
E.~Karipidis, N.~D. Sidiropoulos, and Z.-Q. Luo, ``Far-field multicast
  beamforming for uniform linear antenna arrays,'' \emph{IEEE Trans. Signal
  Process.}, vol.~55, no.~10, pp. 4916--4927, Oct. 2007.

\end{thebibliography}

\end{document}